\newtheorem{theorem}{Theorem}
\newtheorem{lemma}[theorem]{Lemma}
\newcommand{\size}[1]{\ensuremath{|#1|}}
\newcommand{\lrA}[1]{\ensuremath{\left(#1\right)}}
\newcommand{\lrB}[1]{\ensuremath{\left[#1\right]}}
\def\LB{\mbox{LB}}
\newcommand{\EE}[1]{\ensuremath{\mathbb{E}[#1]}}
\newcommand{\EEE}[1]{\ensuremath{\mathbb{E}\lrB{#1}}}
\title{Practical Algorithms with Guaranteed Approximation Ratio for TTP with Maximum Tour Length Two}
\author
{
Jingyang Zhao\footnote{University of Electronic Science and Technology of China. Email: \texttt{jingyangzhao1020@gmail.com}.}
\and
Mingyu Xiao\footnote{University of Electronic Science and Technology of China. Email: \texttt{myxiao@gmail.com}.}
}
\date{}
\begin{document}
\maketitle

\begin{abstract}
The Traveling Tournament Problem (TTP) is a hard but interesting sports scheduling problem inspired by Major League Baseball, which is to design a double round-robin schedule such that each pair of teams plays one game in each other's home venue, minimizing the total distance traveled by all $n$ teams ($n$ is even). In this paper, we consider TTP-2, i.e., TTP under the constraint that at most two consecutive home games or away games are allowed for each team. We propose practical algorithms for TTP-2 with improved approximation ratios. Due to the different structural properties of the problem, all known algorithms for TTP-2 are different for $n/2$ being odd and even, and our algorithms are also different for these two cases. For even $n/2$, our approximation ratio is $1+3/n$, improving the previous result of $1+4/n$. For odd $n/2$, our approximation ratio is $1+5/n$, improving the previous result of $3/2+6/n$. In practice, our algorithms are easy to implement. Experiments on well-known benchmark sets show that our algorithms beat previously known solutions for all instances with an average improvement of $5.66\%$.
\end{abstract}

\maketitle

\section{Introduction}
The Traveling Tournament Problem (TTP), first systematically introduced in~\cite{easton2001traveling}, is a hard but interesting sports scheduling problem inspired by Major League Baseball.
This problem is to find a double round-robin tournament satisfying several constraints that minimizes the total distances traveled by all participant teams.
There are $n$ participating teams in the tournament, where $n$ is always even. Each team should play $2(n-1)$ games in $2(n-1)$ consecutive days. Since each team can only play one game on each day, there are exact $n/2$ games scheduled on each day.
There are exact two games between any pair of teams,
where one game is held at the home venue of one team and the other one is held at the home venue of the other team.
The two games between the same pair of teams could not be scheduled in two consecutive days.
These are the constraints for TTP. We can see that it is not easy to construct a feasible schedule.
Now we need to find an optimal schedule that minimizes the total traveling distances by all the $n$ teams.
A well-known variant of TTP is TTP-$k$, which has one more constraint:
each team is allowed to take at most $k$ consecutive home or away games.
If $k$ is very large, say $k=n-1$, then this constraint will lose its meaning and it becomes TTP again. For this case, a team can schedule its travel distance as short as the solution to the traveling salesmen problem. On the other hand,
in a sports schedule, it is generally believed that home stands and road trips should alternate as regularly as possible for each team~\cite{campbell1976minimum,thielen2012approximation}.
The smaller the value of $k$, the more frequently teams have to return their homes.
TTP and its variants have been extensively studied in the literature~\cite{kendall2010scheduling,rasmussen2008round,thielen2012approximation,DBLP:conf/mfcs/XiaoK16}.

\subsection{Related Work}
In this paper, we will focus on TTP-2. We first survey the results on TTP-$k$.
For $k=1$, TTP-1 is trivial and there is no feasible schedule~\cite{de1988some}.
But when $k\geq 2$, the problem becomes hard. Feasible schedules will exist but it is not easy to construct one. Even no good
brute force algorithm with single exponential running time has been found yet.
In the online benchmark~\cite{trick2007challenge} (there is also a new benchmark website, updated by Van Bulck \emph{et al.}~\cite{DBLP:journals/eor/BulckGSG20}), most instances with more than $10$ teams are still unsolved completely even by using high-performance machines.
The NP-hardness of TTP was proved in \cite{bhattacharyya2016complexity}.
TTP-3 was also proved to be NP-hard \cite{thielen2011complexity} and the idea of the proof can be extended to prove the NP-hardness of TTP-$k$ for each constant $k\geq 4$~\cite{DBLP:journals/corr/abs-2110-02300}.
Although the hardness of TTP-2 has not been theoretically proved yet, most people believe TTP-2 is also hard since no single exponential algorithm to find an optimal solution to TTP-2 has been found after 20 years of study.
In the literature, there is a large number of contributions on approximation algorithms~\cite{miyashiro2012approximation,yamaguchi2009improved,imahori2010approximation,westphal2014,hoshino2013approximation,thielen2012approximation,DBLP:conf/mfcs/XiaoK16} and heuristic algorithms~\cite{easton2003solving,lim2006simulated,anagnostopoulos2006simulated,di2007composite,goerigk2014solving}.

In terms of approximation algorithms, most results are based on the assumption that the distance holds the symmetry and triangle inequality properties. This is natural and practical in the sports schedule.
For TTP-3, the first approximation algorithm, proposed by Miyashiro \emph{et al.}, admits a $2+O(1/n)$ approximation ratio~\cite{miyashiro2012approximation}.
They first proposed a randomized $(2+O(1/n))$-approximation algorithm and then derandomized the algorithm without changing the approximation ratio~\cite{miyashiro2012approximation}.
Then, the approximation ratio was improved to $5/3+O(1/n)$ by Yamaguchi \emph{et al.}~\cite{yamaguchi2009improved} and to $139/87+O(1/n)$ by Zhao \emph{et al.}~\cite{zhao2022improved}.
For TTP-4, the approximation ratio has been improved to $17/10+O(1/n)$~\cite{zhao2022improved}. For TTP-$k$ with $k\geq 5$, the approximation ratio has been improved to $(5k-7)/(2k)+O(k/n)$~\cite{imahori2010approximation}.
For TTP-$k$ with $k\geq n-1$, Imahori\emph{ et al.} \cite{imahori2010approximation} proved an approximation ratio of 2.75. At the same time, Westphal and Noparlik \cite{westphal2014} proved an approximation ratio of 5.875 for any choice of $k\geq 4$ and $n\geq 6$.

In this paper, we will focus on TTP-2.
The first record of TTP-2 seems from the schedule of a basketball conference of ten teams
in~\cite{campbell1976minimum}. That paper did not discuss the approximation ratio.
In fact, any feasible schedule for TTP-2
is a 2-approximation solution under the metric distance~\cite{thielen2012approximation}.
Although any feasible schedule will not have a very bad performance, no simple construction of feasible schedules is known now.
In the literature, all known algorithms for TTP-2 are different for $n/2$ being even and odd. This may be caused by different structural properties. One significant contribution to TTP-2 was done by Thielen and Westphal~\cite{thielen2012approximation}.
They proposed a $(1+16/n)$-approximation algorithm for $n/2$ being even and a $(3/2+6/n)$-approximation algorithm for $n/2$ being odd, and asked as an open problem whether the approximation ratio could be improved to $1+O(1/n)$ for the case that $n/2$ is odd. For even $n/2$, the approximation ratio was improved to $1+4/n$ by Xiao and Kou~\cite{DBLP:conf/mfcs/XiaoK16}.
There is also a known algorithm with the approximation ratio $1+\frac{\lceil\log_2 {n}\rceil+2}{2(n-2)}$, which is better for $n\leq 32$~\cite{DBLP:conf/atmos/ChatterjeeR21}.
For odd $n/2$, two papers solved Thielen and Westphal's open problem independently by giving algorithms with approximation ratio $1+O(1/n)$: Imahori~\cite{imahori20211+} proposed an idea to solve the case of $n\geq 30$ with an approximation ratio of $1+24/n$; in a preliminary version of this paper~\cite{DBLP:conf/ijcai/ZhaoX21}, we provided a practical algorithm with an approximation ratio of $1+12/n$. In this version, we will further improve the approximation ratio by using refined analysis.

\subsection{Our Results}
In this paper, we design two practical algorithms for TTP-2, one for even $n/2$ and one for odd $n/2$.

For even $n/2$, we first propose an algorithm with an approximation ratio of $1+\frac{3}{n}-\frac{10}{n(n-2)}$ by using the packing-and-combining method.
Then, we apply a divide-and-conquer method to our packing-and-combining method, and propose a more general algorithm with an approximation ratio $1+\frac{3}{n}-\frac{18}{n(n-2)}$ for $n\geq 16$ and $n\equiv0 \pmod 8$. Our results improve the previous result of $1+\frac{4}{n}+\frac{4}{n(n-2)}$ in~\cite{DBLP:conf/mfcs/XiaoK16}.
For odd $n/2$, we prove an approximation ratio of $1+\frac{5}{n}-\frac{10}{n(n-2)}$, improving the result of $\frac{3}{2}+\frac{6}{n-4}$ in~\cite{thielen2012approximation}.
In practice, our algorithms are easy to implement and run very fast.
Experiments show that our results can beat all previously-known solutions on the 33 tested benchmark instances in \cite{trick2007challenge}: for even $n/2$ instances, the average improvement is $2.86\%$; for odd $n/2$ instances, the average improvement is $8.65\%$.

Partial results of this paper were presented at the 27th International Computing and Combinatorics Conference (COCOON 2021)~\cite{DBLP:conf/cocoon/ZhaoX21} and the 30th International Joint Conference on Artificial Intelligence (IJCAI 2021)~\cite{DBLP:conf/ijcai/ZhaoX21}. In \cite{DBLP:conf/cocoon/ZhaoX21}, we proved an approximation ratio of $1+\frac{3}{n}-\frac{6}{n(n-2)}$ for TTP-2 with even $n/2$. In \cite{DBLP:conf/ijcai/ZhaoX21}, we proved an approximation ratio of $1+\frac{12}{n}+\frac{8}{n(n-2)}$ for TTP-2 with odd $n/2$. In this paper, we make further improvements for both cases. In the experiments, we also get a better performance.

\section{Preliminaries}\label{sec_pre}
We will always use $n$ to denote the number of teams and let $m=n/2$, where $n$ is an even number.
We use $\{t_1, t_2, \dots, t_n\}$ to denote the set of the $n$ teams.
A sports scheduling on $n$ teams is \emph{feasible} if it holds the following properties.
\begin{itemize}
\item \emph{Fixed-game-value}: Each team plays two games with each of the other $n-1$ teams, one at its home venue and one at its opponent's home venue.
\item \emph{Fixed-game-time}: All the games are scheduled in $2(n-1)$ consecutive days and each team plays exactly one game in each of the $2(n-1)$ days.
\item \emph{Direct-traveling}: All teams are initially at home before any game begins, all teams will come back home after all games, and a team travels directly from its game venue in the $i$th day to its game venue in the $(i+1)$th day.
\item \emph{No-repeat}: No two teams play against each other on two consecutive days.
\item \emph{Bounded-by-$k$}: The number of consecutive home/away games for any team is at most $k$.
\end{itemize}

The TTP-$k$ problem is to find a feasible schedule minimizing the total traveling distance of all the $n$ teams.
The input of TTP-$k$ contains an $n \times n$ distance matrix $D$ that indicates the distance between each pair of teams.
The distance from the home of team $i$ to the home of team $j$ is denoted by $D_{i,j}$.
We also assume that $D$ satisfies the symmetry and triangle inequality properties, i.e., $D_{i,j}=D_{j,i}$ and $D_{i,j} \leq D_{i,h} + D_{h,j}$ for all $i,j,h$. We also let $D_{i,i}=0$ for each $i$.

We will use $G$ to denote an edge-weighted complete graph on $n$ vertices representing the $n$ teams.
The weight of the edge between two vertices $t_i$ and $t_j$ is $D_{i,j}$, the distance from the home of $t_i$ to the home of $t_j$.
We also use $D_i$ to denote the weight sum of all edges incident on $t_i$ in $G$, i.e., $D_i=\sum_{j=1}^n D_{i,j}$.
The sum of all edge weights of $G$ is denoted by $D_G$.

We let $M$ denote a minimum weight perfect matching in $G$. The weight sum of all edges in $M$ is denoted by $D_M$.
We will consider the endpoint pair of each edge in $M$ as a \emph{super-team}. We use $H$ to denote the complete graph on the $m$ vertices representing the $m$ super-teams. The weight of the edge between two super-teams $u_i$ and $u_j$, denoted by $D(u_i,u_j)$, is the sum of the weight of the four edges in $G$ between one team in $u_i$ and one team in $u_j$, i.e., $D(u_i, u_j)=\sum_{t_{i'}\in u_i \& t_{j'}\in u_j}D_{i',j'}$. We also let $D(u_i,u_i)=0$ for any $i$.
We give an illustration of graphs $G$ and $H$ in Figure~\ref{fig001}.
\begin{figure}[ht]
    \centering
    \includegraphics[scale=0.8]{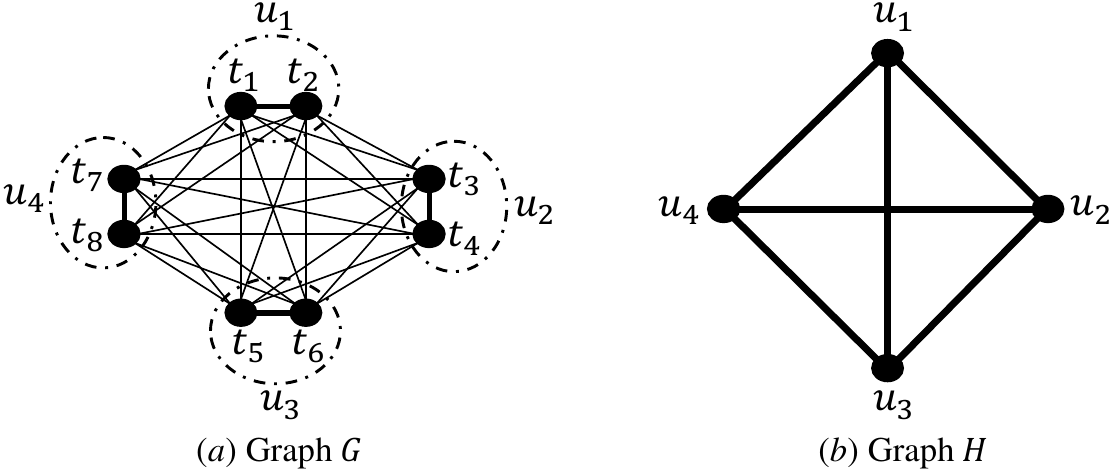}
    \caption{An illustration of graphs $G$ and $H$, where there four dark lines form a minimum perfect matching $M$ in $G$}
    \label{fig001}
\end{figure}

The sum of all edge weights of $H$ is denoted by $D_H$. It holds that
\begin{eqnarray} \label{eqn_GH}
D_H=D_G-D_M.
\end{eqnarray}

\subsection{Independent Lower Bound and Extra Cost}
The \emph{independent lower bound} for TTP-2 was firstly introduced by Campbell and Chen~\cite{campbell1976minimum}.
It has become a frequently used lower bound.
The basic idea of the independent lower bound is to obtain a lower bound $LB_i$ on the traveling distance of a single team $t_i$ independently without considering the feasibility of other teams.

The road of a team $t_i$ in TTP-$2$, starting at its home venue and coming back home after all games, is called
an \emph{itinerary} of the team. The itinerary of $t_i$ is also regarded as a graph on the $n$ teams,
which is called the \emph{itinerary graph} of $t_i$.
In an itinerary graph of $t_i$, the degree of all vertices except $t_i$ is 2 and the degree of $t_i$ is greater than or equal to $n$ since team $t_i$ will visit each other team venue only once.
Furthermore, for any other team $t_j$, there is at least one edge between $t_i$ and $t_j$, because $t_i$ can only visit at most 2 teams on each road trip and then team $t_i$ either comes from its home to team $t_j$ or goes back to its home after visiting team $t_j$. We decompose the itinerary graph of $t_i$ into two parts: one is a spanning star centered at $t_i$ (a spanning tree which only vertex $t_i$ of degree $> 1$) and the forest of the remaining part. Note that in the forest, only $t_i$ may be a vertex of degree $\geq 2$ and all other vertices are degree-1 vertices. See Figure~\ref{fig002} for illustrations of the itinerary graphs.

\begin{figure}[ht]
    \centering
    \includegraphics[scale=1.1]{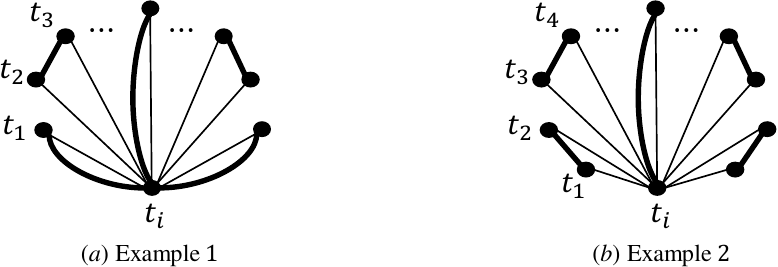}
    \caption{The itinerary graph of $t_i$, where the light edges form a spanning star and the dark edges form the remaining forest. In the right example (b), the remaining forest is a perfect matching of $G$}
    \label{fig002}
 \end{figure}

For different itineraries of $t_i$, the spanning star is fixed and only the remaining forest may be different.
The total distance of the spanning star is $\sum_{j\neq i} D_{i,j}=D_i$. Next, we show an upper and lower bound on the total distance of the remaining forest. For each edge between two vertices $t_{j_1}$ and $t_{j_2}$ ($j_1,j_2\neq i$), we have that $D_{j_1,j_2}\leq D_{i,j_1}+D_{i,j_2}$ by the triangle inequality property. Thus, we know that the total distance of the remaining forest is at most the total distance of the spanning star. Therefore, the distance of any feasible itinerary of $t_i$ is at most $2D_i$.

\begin{lemma}\label{perfecti}
The traveling distance of any itinerary of a team $t_i$ is at most $2D_i$.
\end{lemma}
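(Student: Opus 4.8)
The plan is to exploit the decomposition of the itinerary graph of $t_i$ into a fixed spanning star plus a remaining forest, which was set up in the discussion immediately preceding the statement, and to bound the two pieces separately. The total itinerary distance is exactly the sum of the weights of these two parts, so it suffices to show that each part contributes at most $D_i$.

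First I would recall that the spanning star centered at $t_i$ consists of exactly one edge to each other team, so its total weight is $\sum_{j\neq i} D_{i,j}=D_i$, independent of which itinerary we picked. The work is therefore all in bounding the weight of the remaining forest by $D_i$ as well. Here I would use the structural fact that in the forest every vertex other than $t_i$ has degree at most $1$. Each forest edge is of one of two types: either it is incident to $t_i$, contributing a term $D_{i,j}$ that already appears in $D_i$; or it joins two vertices $t_{j_1},t_{j_2}$ with $j_1,j_2\neq i$, in which case the triangle inequality gives $D_{j_1,j_2}\leq D_{i,j_1}+D_{i,j_2}$.

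Next I would sum these bounds over all edges of the forest. The key point is that, because each non-$t_i$ vertex $t_j$ has degree at most $1$ in the forest, it is incident to at most one forest edge, and hence the term $D_{i,j}$ is charged at most once across all the triangle-inequality estimates. Consequently the total weight of the forest is at most $\sum_{j\neq i} D_{i,j}=D_i$, i.e.\ the forest weight is bounded by the star weight. Combining the two bounds yields that any itinerary of $t_i$ has total distance at most $D_i+D_i=2D_i$, as claimed.

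The only subtle step—really the sole place where care is needed—is the non-overcounting argument: one must invoke the degree-at-most-$1$ property of the non-$t_i$ vertices to guarantee that each $D_{i,j}$ is used at most once when the per-edge triangle inequalities are added together. Once that structural observation is in hand, the rest is a direct application of the triangle inequality and the definition of $D_i$, with no further calculation required.
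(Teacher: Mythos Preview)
Your proposal is correct and follows essentially the same approach as the paper: decompose the itinerary into the fixed spanning star of weight $D_i$ and the remaining forest, then bound the forest weight by $D_i$ via the triangle inequality applied edge by edge. Your write-up is in fact slightly more careful than the paper's, since you make explicit the non-overcounting step (each $t_j$ with $j\neq i$ has degree at most one in the forest, so each $D_{i,j}$ is charged at most once), whereas the paper simply asserts that the forest weight is at most the star weight.
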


Lemma~\ref{perfecti} implies that the worst itinerary only consists of road trips containing one game.

On the other hand, the distance of the remaining forest is at least as that of a minimum perfect matching of $G$ by the triangle inequality.
Recall that we use $M$ to denote a minimum perfect matching of $G$.

\begin{lemma}\label{perfecti+}
The traveling distance of any itinerary of a team $t_i$ is at least $D_i+D_M$.
\end{lemma}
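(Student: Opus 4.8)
The plan is to reuse the decomposition of the itinerary graph of $t_i$ into a spanning star centered at $t_i$ plus a remaining forest, exactly as set up before Lemma~\ref{perfecti}. The total weight of the spanning star is fixed at $\sum_{j \neq i} D_{i,j} = D_i$, so it suffices to show that the remaining forest has weight at least $D_M$. Since the full itinerary distance equals (star weight) $+$ (forest weight), establishing this bound on the forest immediately yields the claimed lower bound $D_i + D_M$.

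First I would pin down the combinatorial structure of the remaining forest. Recall that for every other team $t_j$ there is at least one edge between $t_i$ and $t_j$ in the itinerary graph, and that $t_i$ visits each other venue exactly once, so every vertex other than $t_i$ has degree exactly $2$ in the full itinerary graph. After removing one incident edge per vertex to form the spanning star, each of the $n-1$ non-center vertices retains degree exactly $1$ in the remaining forest. The key observation is that this forces the remaining forest, restricted to the $n-1$ vertices $\{t_j : j \neq i\}$, to be a collection of edges in which every such vertex is matched; concretely, because $t_i$ can appear in at most one road trip as an endpoint and every non-home stop lies on a trip of length at most two (the Bounded-by-$2$ constraint), the edges of the forest among $\{t_j : j \neq i\}$ together with a possible edge incident on $t_i$ cover all $n-1$ other vertices, and one can complete this to a perfect matching of $G$ by pairing $t_i$ with its single leftover neighbor (or noting the pairing is already complete).

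Next I would invoke minimality of $M$. Once I have argued that the remaining forest contains (or can be extended by triangle inequality to) a perfect matching $M'$ of $G$ on all $n$ vertices, the triangle inequality lets me charge the weight of any length-two road trip $t_{j_1} \to t_{j_2}$ to the single matching edge $D_{j_1,j_2}$, so the forest weight is at least $D_{M'} \geq D_M$, where the last inequality is just the definition of $M$ as a minimum weight perfect matching. Combining with the fixed star weight $D_i$ gives the bound. The main obstacle I anticipate is the careful parity/covering argument in the previous paragraph: I must verify that the remaining forest really does induce a perfect matching on all $n$ teams rather than leaving some vertex unmatched or creating a vertex of degree $\geq 2$ among the non-center teams. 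Handling the single vertex possibly left adjacent to $t_i$ — and confirming that it pairs off cleanly so that $M'$ is a genuine perfect matching of $G$ — is the delicate step; everything after that is a direct application of the triangle inequality and the minimality of $D_M$.
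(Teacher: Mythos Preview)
Your overall strategy is the same as the paper's: split the itinerary into the spanning star (weight $D_i$) plus the remaining forest, and argue that the forest has weight at least $D_M$. The paper's own proof is a single sentence invoking the triangle inequality for this second step.

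There is, however, a real gap in your structural description of the forest. You write that there is ``a possible edge incident on $t_i$'' and speak of pairing $t_i$ with ``its single leftover neighbor'', and you justify this by saying ``$t_i$ can appear in at most one road trip as an endpoint''. That is not true: $t_i$ is the start and end of \emph{every} road trip. Each length-one road trip $(t_i,t_j,t_i)$ contributes two parallel edges $t_it_j$ to the itinerary graph; one goes into the star and the other stays in the forest. So if the itinerary has $r$ length-one trips, then $t_i$ has degree exactly $r$ in the remaining forest, and $r$ can be any odd number between $1$ and $n-1$. The forest is a perfect matching of $G$ only in the special case $r=1$.

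The repair is precisely the triangle-inequality step, but applied at $t_i$ rather than where you suggest. Keep one of the $r$ edges $t_it_{j_1},\dots,t_it_{j_r}$; pair the remaining $r-1$ (an even number) arbitrarily and replace each pair $t_it_{j_a},\,t_it_{j_b}$ by the single edge $t_{j_a}t_{j_b}$, whose weight satisfies $D_{j_a,j_b}\le D_{i,j_a}+D_{i,j_b}$. This yields a genuine perfect matching $M'$ of $G$ with $D_{M'}\le(\text{forest weight})$, and then $D_M\le D_{M'}$ gives the bound. Note the direction: you are \emph{contracting} the forest to a matching of no greater weight, not ``extending'' it, so the phrasing ``contains (or can be extended by triangle inequality to) a perfect matching'' is backwards.
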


Thus, we have a lower bound $LB_i$ for each team $t_i$:
\begin{eqnarray} \label{eqn_lower1}
LB_i=D_i+D_M.
\end{eqnarray}

The itinerary of $t_i$ to achieve $LB_i$ is called the \emph{optimal itinerary}.
The \emph{independent lower bound} for TTP-2 is the traveling distance such that all teams reach their optimal itineraries, which is denoted as
\begin{eqnarray} \label{eqn_lowerbound}
 LB=\sum_{i=1}^n LB_i =\sum_{i=1}^n (D_i +D_M)=2D_G+nD_M.
\end{eqnarray}

Lemma~\ref{perfecti} and (\ref{eqn_lower1}) imply that
\begin{lemma}\label{useful}
The traveling distance of any feasible itinerary of a team $t_i$ is at most $2D_i$.
\end{lemma}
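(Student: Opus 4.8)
The plan is to treat Lemma~\ref{useful} as the restriction of Lemma~\ref{perfecti} to feasible itineraries, and then to reproduce the decomposition argument so that the statement is self-contained. First I would observe that every \emph{feasible} itinerary of $t_i$ is, in particular, an itinerary in the sense of Lemma~\ref{perfecti}: the feasibility constraints (Fixed-game-value, No-repeat, Bounded-by-$k$ with $k=2$, and so on) only \emph{restrict} the set of admissible itineraries, so any distance upper bound proved for all itineraries applies verbatim to the feasible ones. Hence the bound of $2D_i$ carries over immediately, and (\ref{eqn_lower1}) records the matching lower end $LB_i = D_i + D_M$.

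To make the upper bound explicit I would recall the decomposition of the itinerary graph of $t_i$ into a spanning star centered at $t_i$ together with the remaining forest. The spanning star contains exactly one edge $\{t_i,t_j\}$ for each $j \ne i$, so its total weight is $\sum_{j \ne i} D_{i,j} = D_i$. For the forest I would bound each edge separately: an edge $\{t_{j_1},t_{j_2}\}$ with $j_1,j_2 \ne i$ (arising from a road trip with two away games) satisfies $D_{j_1,j_2} \le D_{i,j_1} + D_{i,j_2}$ by the triangle inequality, while an edge incident to $t_i$ (arising from a road trip with a single away game) contributes only its own weight $D_{i,j}$.

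The one point requiring care is to ensure these per-edge bounds sum to at most $D_i$ rather than over-counting. Here I would invoke the structural fact already recorded in the text, namely that in the remaining forest only $t_i$ can have degree $\ge 2$ and every other vertex has degree exactly $1$. Consequently each team $t_j$ with $j \ne i$ is an endpoint of exactly one forest edge, so the term $D_{i,j}$ is charged at most once across the whole sum; adding the per-edge bounds therefore yields a total forest weight of at most $\sum_{j \ne i} D_{i,j} = D_i$. Combining with the star weight $D_i$ gives the claimed bound of $2D_i$ on the traveling distance of the feasible itinerary, which, together with $LB_i = D_i + D_M$, situates the cost in the interval $[\,D_i + D_M,\; 2D_i\,]$.

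I do not expect a substantial obstacle: the statement is essentially a corollary of Lemma~\ref{perfecti}, and the only subtlety — guaranteeing that the triangle-inequality bounds on the forest edges sum to at most $D_i$ — is fully controlled by the degree structure of the remaining forest described above.
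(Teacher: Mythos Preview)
Your proposal is correct and follows exactly the paper's approach: the paper simply records that Lemma~\ref{useful} is implied by Lemma~\ref{perfecti} (since a feasible itinerary is in particular an itinerary), and you say precisely this before optionally reproducing the star-plus-forest decomposition for self-containment. The extra detail you supply is sound but unnecessary here; the one-line reduction to Lemma~\ref{perfecti} already suffices.
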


Hence, we can get that
\begin{theorem} \label{twoapp}
Any feasible schedule for TTP-2 is a 2-approximation solution.
\end{theorem}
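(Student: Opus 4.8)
The plan is to compare the total distance of any feasible schedule against the independent lower bound $LB$ from~\eqref{eqn_lowerbound}, and show the ratio is at most $2$. The key observation is that the upper and lower bounds established in Lemmas~\ref{perfecti} and~\ref{perfecti+} both apply to \emph{every} team simultaneously and are completely decoupled across teams, so I can reason team-by-team and then sum.

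First I would fix an arbitrary feasible schedule $S$ and let $\mathrm{dist}_i(S)$ denote the traveling distance incurred by team $t_i$ in $S$. By Lemma~\ref{useful} (equivalently Lemma~\ref{perfecti}), each team's itinerary in $S$ satisfies $\mathrm{dist}_i(S) \le 2D_i$. Summing over all $n$ teams gives
\begin{equation}
\mathrm{dist}(S) = \sum_{i=1}^n \mathrm{dist}_i(S) \le \sum_{i=1}^n 2D_i = 4D_G,
\end{equation}
where I use $\sum_i D_i = 2D_G$ since each edge of $G$ is counted once from each endpoint. Next I would lower-bound the optimum. Since $LB$ is a valid lower bound on any feasible schedule (it is the sum of the per-team lower bounds $LB_i = D_i + D_M$ from Lemma~\ref{perfecti+}, which no schedule can beat), we have $\OPT \ge LB = 2D_G + nD_M \ge 2D_G$, the last inequality holding because $D_M \ge 0$.

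Combining the two estimates yields $\mathrm{dist}(S) \le 4D_G = 2 \cdot (2D_G) \le 2\,\OPT$. Since $S$ was an arbitrary feasible schedule, every feasible schedule has cost at most $2\,\OPT$, which is exactly the claim that any feasible schedule is a $2$-approximation. The argument is essentially routine once the two one-sided per-team bounds are in place; the only point requiring a little care is confirming that $LB$ genuinely lower-bounds $\OPT$ and not merely the cost of some idealized itinerary. This follows because each team's true itinerary in any schedule is itself a feasible itinerary, so Lemma~\ref{perfecti+} applies to it, and summing the per-team inequalities gives $\OPT \ge LB$. I would expect the main (minor) obstacle to be stating this decoupling cleanly, since the per-team bounds ignore the joint feasibility constraints (no-repeat, fixed-game-time, bounded-by-$2$); the point is precisely that these constraints only restrict the schedule further and hence cannot lower the cost below $LB$ or raise it above $4D_G$.
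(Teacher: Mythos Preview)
Your proposal is correct and follows exactly the route the paper takes: it uses the per-team upper bound $2D_i$ from Lemma~\ref{perfecti}/\ref{useful}, sums to get $4D_G$, and compares against the independent lower bound $LB=2D_G+nD_M\ge 2D_G$ from~\eqref{eqn_lowerbound}. The paper itself leaves the argument implicit (it simply writes ``Hence, we can get that'' after Lemma~\ref{useful}), so your write-up is in fact more detailed than the original, but the logic is identical.
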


Theorem~\ref{twoapp} was first proved in~\cite{thielen2012approximation}.
For any team, it is possible to reach its optimal itinerary. However, it is impossible for all teams to reach their optimal itineraries synchronously in a feasible schedule even for $n=4$~\cite{thielen2012approximation}. It is easy to construct an example. So the independent lower bound for TTP-2 is not achievable.

To analyze the quality of a schedule of the tournament, we will compare the itinerary of each team with the optimal itinerary.
The different distance is called the \emph{extra cost}. Sometimes it is not convenient to compare the whole itinerary directly.
We may consider the extra cost for a subpart of the itinerary.
We may split an itinerary into several trips and each time we compare some trips.
A \emph{road trip} in an itinerary of team $t_i$ is a simple cycle starting and ending at $t_i$.
So an itinerary consists of several road trips. For TTP-2, each road trip is a triangle or a cycle on two vertices.
Let $L$ and $L'$ be two itineraries of team $t_i$, $L_s$ be a sub itinerary of $L$ consisting of several road trips in $L$, and
$L'_s$ be a sub itinerary of $L'$ consisting of several road trips in $L'$.
We say that the sub itineraries $L_s$ and $L'_s$ are \emph{coincident} if they visit the same set of teams.
We will only compare a sub itinerary of our schedule with a coincident sub itinerary of the optimal itinerary and consider the extra cost between them.

\section{Framework of The Algorithms}
Our algorithms for even $n/2$ and odd $n/2$ are different due to the different structural properties.
However, the two algorithms have a similar framework. We first introduce the common structure of our algorithms.

\subsection{The Construction}
The cases that $n=4$ and $6$ can be solved easily by a brute force method. For the sake of presentation, we assume that the number of teams $n$ is at least $8$.

Our construction of the schedule for each case consists of two parts. First we arrange \emph{super-games} between \emph{super-teams}, where each super-team contains
a pair of normal teams. Then we extend super-games to normal games between normal teams.
To make the itinerary as similar as the optimal itinerary, we may take each pair of teams in the minimum perfect matching $M$ of $G$ as a \emph{super-team}. There are $n$ normal teams and then there are $m=n/2$ super-teams. Recall that we use $\{u_1, u_2, \dots, u_{m-1}, u_{m}\}$ to denote the set of super-teams and relabel the $n$ teams such that $u_i=\{t_{2i-1},t_{2i}\}$ for each $i$.

Each super-team will attend $m-1$ super-games in $m-1$ time slots.
Each super-game in the first $m-2$ time slots will be extended to normal games between normal teams on four days, and each super-game in the last time slot will be extended to normal games between normal teams on six days. So each normal team $t_i$ will attend $4\times (m-2)+6=4m-2=2n-2$ games.
This is the number of games each team $t_i$ should attend in TTP-2.

For even $n/2$ and odd $n/2$, the super-games and the way to extend super-games to normal games will be different.

\subsection{The Order of Teams}
To get a schedule with a small total traveling distance, we order teams such that the pair of teams in each super-team corresponds to an edge in the minimum matching $M$. However, there are still $m!\cdot 2^m$ choices to order all the $n$ teams, where there are $m!$ choices to order super-teams $\{u_1,\dots,u_m\}$ and $2^m$ choices to order all teams in $m$ super-teams (there are two choices to order the two teams in each super-team). To find an appropriate order, we propose a simple randomized algorithm which contains the following four steps.

\medskip

\noindent\textbf{Step~1.} Compute a minimum perfect matching $M$ of $G$.

\noindent\textbf{Step~2.} Randomly sort all $m$ edges in $M$ and get a set of super-teams $\{u_1,\dots,u_m\}$ by taking the pair of teams corresponding to the $i$-th edge of $M$ as super-team $u_i$.

\noindent\textbf{Step~3.} Randomly order teams $\{t_{2i-1}, t_{2i}\}$ in each super-team $u_i$.

\noindent\textbf{Step~4.} Apply the order of $n$ teams to our construction.

\medskip

The randomized versions of the algorithms are easier to present and analyze. We show that the algorithms can be derandomized efficiently by using the method of conditional expectations~\cite{motwani1995randomized}.

\begin{lemma}\label{core}
Assuming $t_i\in u_{i'}$ and $t_j\in u_{j'}$ where $i'\neq j'$, it holds that $\EE{D_{i,j}}\leq\frac{1}{n(n-2)}\LB$.
\end{lemma}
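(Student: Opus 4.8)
The plan is to determine the exact distribution of the (random) pair of teams that wind up in positions $i$ and $j$, and then to read off $\EE{D_{i,j}}$ as an average of distances over that distribution. The crucial observation is that, since $i'\neq j'$, positions $i$ and $j$ belong to two \emph{distinct} super-teams, i.e., two distinct edges of the matching $M$; and the randomization of Steps~2 and~3 makes $(t_i,t_j)$ a uniformly random \emph{ordered} pair of teams whose two endpoints lie in different edges of $M$.

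First I would establish this uniformity precisely. In Step~2 the ordered pair of super-teams $(u_{i'},u_{j'})$ is a uniformly random ordered pair of distinct edges of $M$, so any fixed such pair is selected with probability $1/(m(m-1))$. Conditioned on this choice, Step~3 orients the two edges independently and uniformly, and the occupants of positions $i$ and $j$ are then determined by independent fair coin flips, contributing a factor $1/4$. Hence every ordered pair $(t_a,t_b)$ of teams lying in two different matching edges occurs with probability $\frac{1}{4m(m-1)}=\frac{1}{n(n-2)}$, using $m=n/2$. Since each of the $n$ teams can be paired with any of the $n-2$ teams outside its own matching edge, there are exactly $n(n-2)$ such ordered pairs, confirming this is a genuine probability distribution.

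Next I would write the expectation as the corresponding uniform average and evaluate the sum by subtracting the same-edge contributions from the sum over all ordered pairs:
\begin{equation*}
\EE{D_{i,j}}=\frac{1}{n(n-2)}\sum_{\substack{a,b\,:\, t_a,t_b\text{ lie in}\\ \text{different edges of }M}}D_{a,b}
=\frac{1}{n(n-2)}\Bigl(\sum_{a\neq b}D_{a,b}-\sum_{e\in M}2D_{e}\Bigr).
\end{equation*}
The sum over all ordered pairs of distinct teams is $2D_G$, while the same-edge ordered pairs contribute $2D_M$, so the numerator equals $2D_G-2D_M=2D_H$ by~(\ref{eqn_GH}). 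Finally, recalling $\LB=2D_G+nD_M$ from~(\ref{eqn_lowerbound}) and that $D_M\geq 0$, I would conclude
\begin{equation*}
\EE{D_{i,j}}=\frac{2D_G-2D_M}{n(n-2)}\leq\frac{2D_G+nD_M}{n(n-2)}=\frac{1}{n(n-2)}\LB.
\end{equation*}
I expect the only delicate point to be the uniformity argument in the second paragraph: one must check that Steps~2 and~3 jointly assign equal weight to every cross-edge ordered pair, in particular that the within-edge orientations are independent of the random edge ordering, so that the counting of $n(n-2)$ equiprobable outcomes is exact. The remaining computation of the two sums and the comparison with $\LB$ is routine bookkeeping.
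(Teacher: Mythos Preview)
Your proof is correct and follows essentially the same approach as the paper: both arguments observe that the random labeling makes the pair $(t_i,t_j)$ uniform over cross-matching pairs, compute $\EE{D_{i,j}}=\frac{2}{n(n-2)}D_H$, and bound this by $\frac{1}{n(n-2)}\LB$ using $\LB\geq 2D_H$. The only cosmetic difference is that you count ordered pairs (probability $1/(n(n-2))$ over $n(n-2)$ outcomes) while the paper counts unordered edges of $G-M$ (probability $2/(n(n-2))$ over $n(n-2)/2$ edges), which is immaterial by symmetry of the distance matrix.
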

\begin{proof}
An edge $t_it_j$ with $i'\neq j'$ corresponds to an edge in $G-M$.
Since we label super-teams and teams in each super-team randomly, the probability of each edge in $G-M$ being $t_it_j$ is $\frac{1}{\size{G-M}}$.
Hence, the expected weight of the edge $t_it_j$ is
$$\frac{1}{\size{G-M}}(D_G-D_M)=\frac{1}{n(n-1)/2-n/2}D_H=\frac{2}{n(n-2)}D_H\leq \frac{1}{n(n-2)}\LB,$$
because $\LB=2D_G+nD_M\geq 2D_G\geq 2D_H$ by (\ref{eqn_GH}) and (\ref{eqn_lowerbound}).
\end{proof}

Next, we will show the details of our constructions.

\section{The Construction for Even $n/2$}
In this section, we study the case of even $n/2$. We will first introduce the construction of the schedule. Then, we analyze the approximation quality of the randomized algorithm. At last, we will propose a dive-and-conquer method to get some further improvements.

\subsection{Construction of the Schedule}
We construct the schedule for super-teams from the first time slot to the last time slot $m-1$.
In each of the $m-1$ time slots, there are $\frac{m}{2}$ super-games.
In total, we have four different kinds of super-games: \emph{normal super-games}, \emph{left super-games}, \emph{penultimate super-games}, and \emph{last super-games}.
Each of the first three kinds of super-games will be extended to eight normal games on four consecutive days. Each last super-game will be extended to twelve normal games on six consecutive days.
We will indicate what kind each super-game belongs to.

For the first time slot, the $\frac{m}{2}$ super-games are arranged as shown in Figure~\ref{figa1}. Super-team $u_i$ plays against super-team $u_{m-1-i}$ for $i\in \{1, \dots, m/2-1\}$ and super-team $u_{m-1}$ plays against super-team $u_{m}$. Each super-game is represented by a directed edge, the direction information of which will be used to extend super-games to normal games between normal teams.
All the super-games in the first time slot are normal super-games.

\begin{figure}[ht]
    \centering
    \includegraphics[scale=0.55]{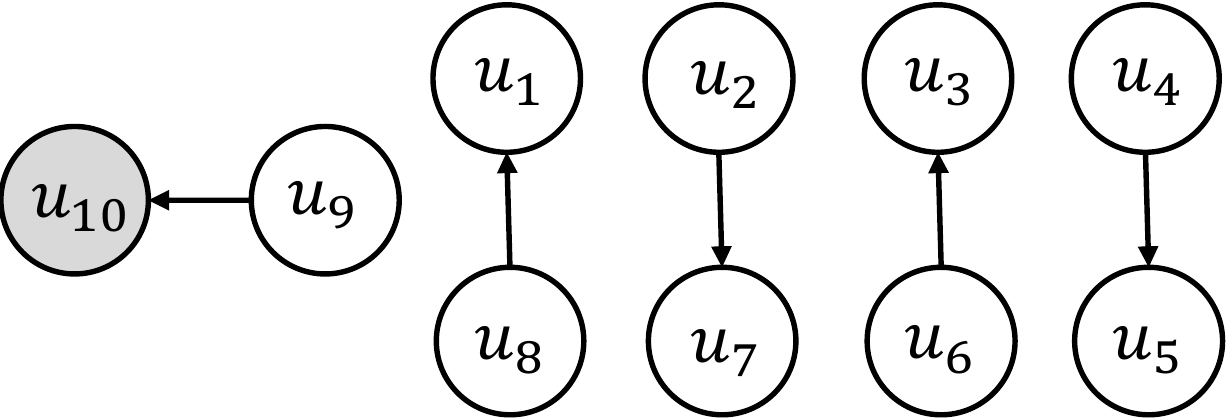}
    \caption{The super-game schedule in the first time slot for an instance with $m=10$}
    \label{figa1}
\end{figure}

From Figure~\ref{figa1}, we can see that the super-team $u_m=u_{10}$ is denoted as a dark node and all other super-teams $u_1, \dots, u_{m-1}$ are denoted as white nodes.
The white nodes form a cycle $(u_1,u_2,\dots,u_{m-1}, u_1)$.
In the second time slot, we keep the position of $u_m$ unchanged, change the positions of white super-teams in the cycle by moving one position in the clockwise direction, and also change the direction of each edge except for the most left edge incident on $u_m$. Please see Figure~\ref{figa2} for an illustration of the schedule in the second time slot.

In the second time slot, the super-game including $u_m$ is a left super-game and we put a letter `L' on the edge in Figure~\ref{figa2} to indicate this.
All other super-games are still normal super-games. In the second time slot, there are $\frac{m}{2}-1$ normal super-games and one left super-game.

 \begin{figure}[ht]
    \centering
    \includegraphics[scale=0.55]{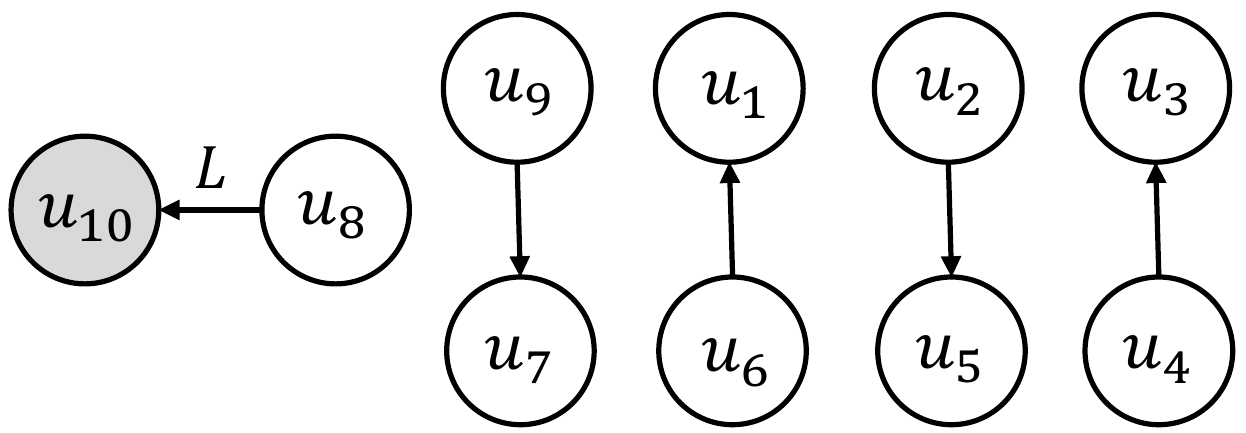}
    \caption{The super-game schedule in the second time slot for an instance with $m=10$}
    \label{figa2}
\end{figure}

In the third time slot,
we also change the positions of white super-teams in the cycle by moving one position in the clockwise direction while the direction of all edges is reversed. The position of the dark node $u_m$ will always keep the same.
In this time slot, there are still $\frac{m}{2}-1$ normal super-games and one left super-game that contains the super-team $u_m$. An illustration of the schedule in the third time slot is shown in Figure~\ref{figa3}.

\begin{figure}[ht]
    \centering
    \includegraphics[scale=0.55]{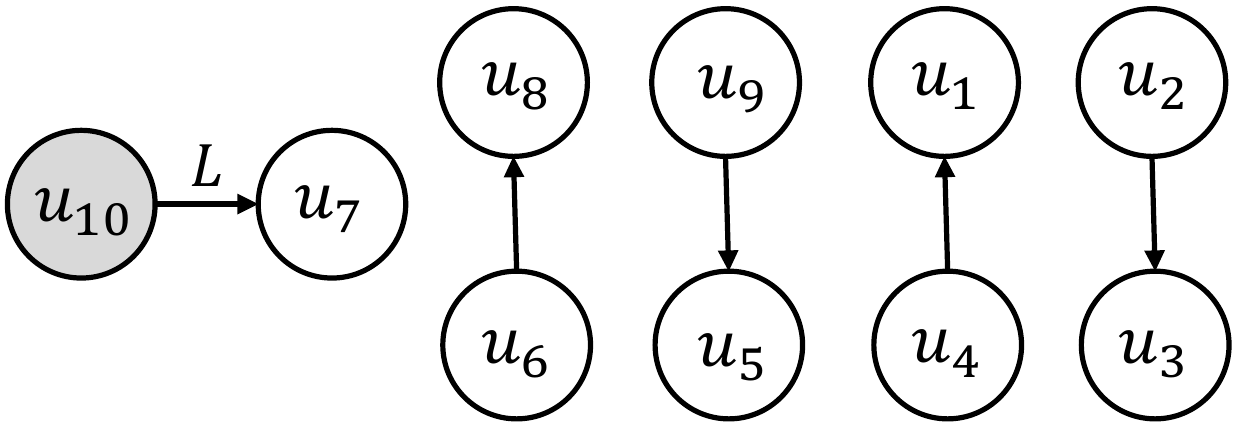}
    \caption{The super-game schedule in the third time slot for an instance with $m=10$}
    \label{figa3}
\end{figure}

The schedules for the other time slots are derived analogously.
However, the kinds of super-games in different time slots may be different.
For the first time slot, all the $\frac{m}{2}$ super-games in it are normal super-games.
For time slot $i$ $(2\leq i \leq m-3)$, the super-game involving super-team $u_m$ is a left super-game and all other super-games are normal super games.
For time slot $m-2$, all the $\frac{m}{2}$ super-games in it are penultimate super-games.
For time slot $m-1$, all the $\frac{m}{2}$ super-games in it are last super-games.  Figure~\ref{figa45} shows an illustration of the super-game schedule in the last two time slots, where we put a letter `P' (resp., `T') on the edge to indicate that the super-game is a penultimate (resp., last) super-game.

\begin{figure}[ht]
    \centering
    \includegraphics[scale=0.55]{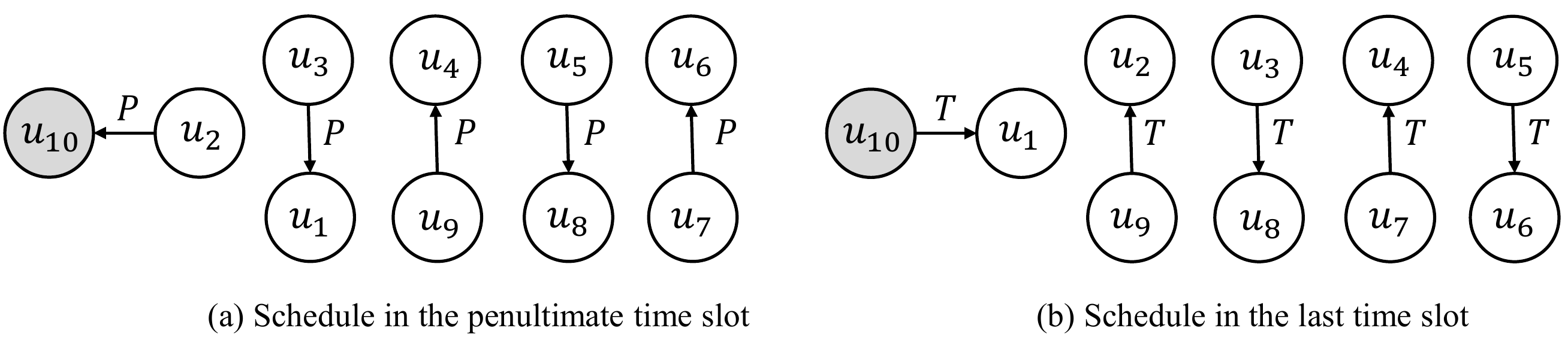}
    \caption{The super-game schedules in the last two time slots for an instance with $m=10$}
    \label{figa45}
\end{figure}

Next, we explain how to extend the super-games to normal games.
Recall that we have four kinds of super-games: normal, left, penultimate, and last.

\textbf{Case~1. Normal super-games}:
Each normal super-game will be extended to eight normal games on four consecutive days.
Assume that in a normal super-game, super-team $u_{i}$ plays against the super-team $u_{j}$ at the home venue of $u_j$ in time slot $q$ ($1\leq i,j\leq m$ and $1\leq q\leq m-3$). Recall that $u_{i}$ represents normal teams \{$t_{2i-1}, t_{2i}$\} and $u_{j}$ represents normal teams \{$t_{2j-1}, t_{2j}$\}. The super-game will be extended to eight normal games on four corresponding days from $4q-3$ to $4q$, as shown in Figure~\ref{figa6}. A directed edge from team $t_{i'}$ to team $t_{i''}$ means that $t_{i'}$ plays against $t_{i''}$ at the home venue of $t_{i''}$.
Note that if the super-game is at the home venue of $u_i$, i.e., there is directed edge from $u_j$ to $u_i$, then the direction of all edges in the figure will be reversed.

\begin{figure}[ht]
    \centering
    \includegraphics[scale=1]{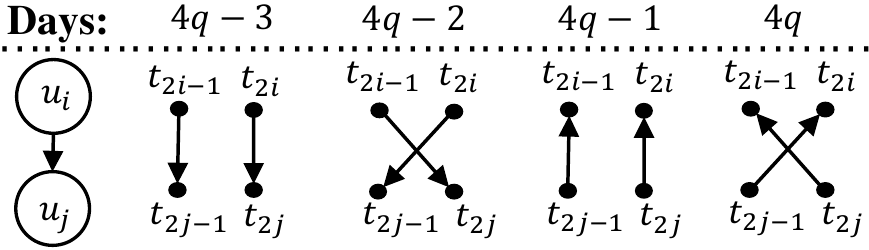}
    \caption{Extending normal super-games}
    \label{figa6}
\end{figure}

\textbf{Case~2. Left super-games}:
Assume that in a left super-game, super-team $u_{i}$ plays against super-team $u_{m}$ at the home venue of $u_m$ in (even) time slot $q$ ($3\leq i\leq m-2$ and $2\leq q\leq m-3$). Recall that $u_{m}$ represents normal teams \{$t_{2m-1}, t_{2m}$\} and $u_{i}$ represents normal teams \{$t_{2i-1}, t_{2i}$\}. The super-game will be extended to eight normal games on four corresponding days from $4q-3$ to $4q$, as shown in Figure~\ref{figa7}, for even time slot $q$. Note that the direction of edges in the figure will be reversed for odd time slot $q$.

\begin{figure}[ht]
    \centering
    \includegraphics[scale=1]{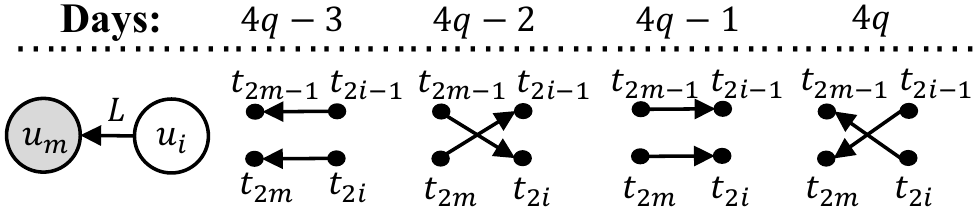}
    \caption{Extending left super-games}
    \label{figa7}
\end{figure}

\textbf{Case~3. Penultimate super-games}:
Assume that in a penultimate super-game, super-team $u_{i}$ plays against super-team $u_{j}$ at the home venue of $u_j$ in time slot $q=m-2$ ($1\leq i,j\leq m$). The super-game will be extended to eight normal games on four corresponding days from $4m-11$ to $4m-8$, as shown in Figure~\ref{figa8}.

\begin{figure}[ht]
    \centering
    \includegraphics[scale=1]{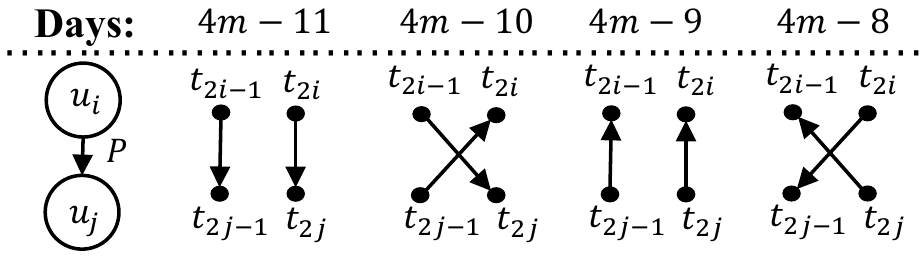}
    \caption{Extending penultimate super-games}
    \label{figa8}
\end{figure}

\textbf{Case~4. Last super-games}:
Assume that in a last super-game, super-team $u_{i}$ plays against super-team $u_{j}$ at the home venue of $u_j$ in time slot $q=m-1$ ($1\leq i,j\leq m$). The super-games will be extended to twelve normal games on six corresponding days from $4m-7$ to $4m-2$, as shown in Figure~\ref{figa9}.

\begin{figure}[ht]
    \centering
    \includegraphics[scale=1]{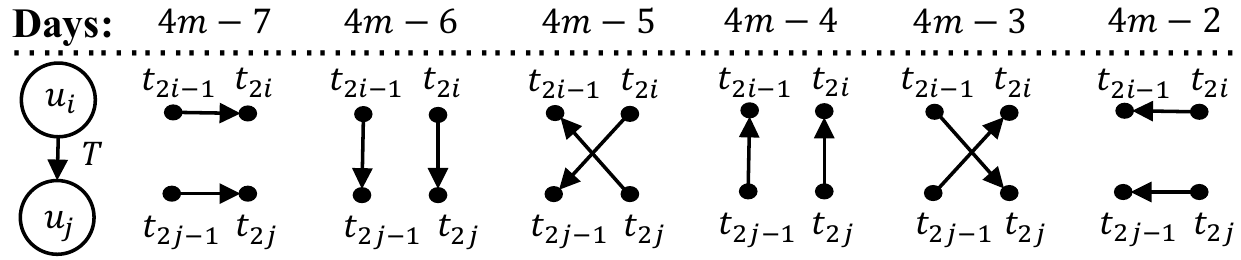}
    \caption{Extending last super-games}
    \label{figa9}
\end{figure}

We have described the main part of the scheduling algorithm. Before proving its feasibility, we first show an example of the schedule for $n=8$ teams constructed by our method. In Table~\ref{ansexample}, the $i$-th row indicates team $t_i$, the $j$-th column indicates the $j$-th day in the double round-robin, item $+t_{x}$ (resp., $-t_x$) on the $i$-th row and $j$-th column means that team $t_i$ plays against team $t_{x}$ in the $j$-th day at the home venue of team $t_{x}$ (resp., $t_i$).

\begin{table}[ht]
\centering
\begin{tabular}{c|cccccccccccccc}
   & 1 & 2 & 3 & 4 & 5 & 6 & 7 & 8 & 9 & 10 & 11 & 12 & 13 & 14\\
   \hline
  $t_{1}$ & $-t_{3}$ & $-t_{4}$ & $+t_{3}$ & $+t_{4}$ & $-t_{5}$ & $+t_{6}$ & $+t_{5}$ & $-t_{6}$ &
            $+t_{2}$ & $-t_{7}$ & $-t_{8}$ & $+t_{7}$ & $+t_{8}$ & $-t_{2}$ \\
  $t_{2}$ & $-t_{4}$ & $-t_{3}$ & $+t_{4}$ & $+t_{3}$ & $-t_{6}$ & $-t_{5}$ & $+t_{6}$ & $+t_{5}$ &
            $-t_{1}$ & $-t_{8}$ & $+t_{7}$ & $+t_{8}$ & $-t_{7}$ & $+t_{1}$ \\
  $t_{3}$ & $+t_{1}$ & $+t_{2}$ & $-t_{1}$ & $-t_{2}$ & $+t_{7}$ & $+t_{8}$ & $-t_{7}$ & $-t_{8}$ &
            $+t_{4}$ & $-t_{5}$ & $-t_{6}$ & $+t_{5}$ & $+t_{6}$ & $-t_{4}$ \\
  $t_{4}$ & $+t_{2}$ & $+t_{1}$ & $-t_{2}$ & $-t_{1}$ & $+t_{8}$ & $-t_{7}$ & $-t_{8}$ & $+t_{7}$ &
            $-t_{3}$ & $-t_{6}$ & $+t_{5}$ & $+t_{6}$ & $-t_{5}$ & $+t_{3}$ \\
  $t_{5}$ & $+t_{7}$ & $+t_{8}$ & $-t_{7}$ & $-t_{8}$ & $+t_{1}$ & $+t_{2}$ & $-t_{1}$ & $-t_{2}$ &
            $+t_{6}$ & $+t_{3}$ & $-t_{4}$ & $-t_{3}$ & $+t_{4}$ & $-t_{6}$ \\
  $t_{6}$ & $+t_{8}$ & $+t_{7}$ & $-t_{8}$ & $-t_{7}$ & $+t_{2}$ & $-t_{1}$ & $-t_{2}$ & $+t_{1}$ &
            $-t_{5}$ & $+t_{4}$ & $+t_{3}$ & $-t_{4}$ & $-t_{3}$ & $+t_{5}$ \\
  $t_{7}$ & $-t_{5}$ & $-t_{6}$ & $+t_{5}$ & $+t_{6}$ & $-t_{3}$ & $+t_{4}$ & $+t_{3}$ & $-t_{4}$ &
            $+t_{8}$ & $+t_{1}$ & $-t_{2}$ & $-t_{1}$ & $+t_{2}$ & $-t_{8}$ \\
  $t_{8}$ & $-t_{6}$ & $-t_{5}$ & $+t_{6}$ & $+t_{5}$ & $-t_{4}$ & $-t_{3}$ & $+t_{4}$ & $+t_{3}$ &
            $-t_{7}$ & $+t_{2}$ & $+t_{1}$ & $-t_{2}$ & $-t_{1}$ & $+t_{7}$ \\
\end{tabular}
\caption{The schedule for $n=8$ teams, where the horizontal ordinate represents the teams, the ordinate represents the days, and
`$+$' (resp., `$-$') means that the team on the corresponding horizontal ordinate plays at its opponent's home (resp., own home)}
\label{ansexample}
\end{table}

From Table~\ref{ansexample}, we can roughly check the feasibility of this instance: on each line there are at most two consecutive `$+$'/`$-$', and each team plays the required games.
Next, we formally prove the correctness of our algorithm.

\begin{theorem}\label{feas1}
For TTP-$2$ with $n$ teams such that $n\equiv 0 \pmod 4$ and $n\geq 8$, the above construction can generate a feasible schedule.
\end{theorem}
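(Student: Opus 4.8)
The plan is to verify directly that the constructed schedule satisfies all five feasibility properties listed in Section~\ref{sec_pre}: fixed-game-value, fixed-game-time, direct-traveling, no-repeat, and bounded-by-$2$. Direct-traveling is automatic from the way an itinerary is read off the schedule, so the real content is the other four. I would organize the verification into a \emph{super-level} part (correctness of the round-robin among super-teams) and a \emph{within-block} part (correctness of the extension patterns), and then reconcile the two at the boundaries between time slots.

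First I would establish that the super-game schedule is a valid single round-robin on the $m$ super-teams. The construction fixes $u_m$ and rotates the white nodes $u_1,\dots,u_{m-1}$ one position per time slot, which is exactly the standard circle (polygon) method; hence over the $m-1$ time slots every pair of super-teams meets in exactly one super-game, and in each time slot every super-team plays exactly one super-game. From this I would deduce fixed-game-value: each inter-super-team super-game between $u_i$ and $u_j$ is extended (Figures~\ref{figa6}--\ref{figa9}) to the eight directed games among the four team-pairs taking one team from $u_i$ and one from $u_j$, one game at each home venue; and the two intra-super-team games $t_{2i-1}$ versus $t_{2i}$ (home and away) for each super-team are supplied by the four extra games of the last super-game, which is precisely why the last time slot expands to six days rather than four. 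A short count (each team plays $4(m-2)$ inter games in the first $m-2$ blocks plus $6$ games in the last block, totaling $4m-2=2(n-1)$) then confirms every ordered pair of teams is scheduled exactly once.

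For fixed-game-time I would check the day budget and the daily load. The first $m-2$ time slots contribute $4$ days each and the last contributes $6$, for a total of $4(m-2)+6 = 4m-2 = 2(n-1)$ days, as required. Within each block the extension patterns in Figures~\ref{figa6}--\ref{figa9} assign each participating team exactly one game on each day of the block, so every team plays exactly once per day across the whole tournament; this is a finite pattern check that I would state once per super-game type.

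The substantive part, and the main obstacle, is verifying no-repeat and especially bounded-by-$2$. Within a single block both are read off the fixed extension patterns, so the difficulty is entirely at the boundaries between consecutive time slots, where the last two days of one block are concatenated with the first two days of the next. The design handles this through the direction rules: the edge orientations are reversed between consecutive time slots (with the distinguished left edge at $u_m$ treated separately), which is intended to force the home/away pattern of each team at the end of a block to be compatible with the start of the following block, so that no three consecutive home or away games ever appear, while the round-robin property guarantees a team faces a genuinely different super-team opponent across the seam and hence cannot repeat a pairing. The way I would carry this out is to record, for each team, the ordered home/away string and the opponents on the last two days of a block as a function of the super-game type (normal, left, penultimate, last) and the parity of the time slot, do the same for the first two days, and then enumerate the admissible type/parity transitions, checking in each case that the four-day window straddling the boundary contains no forbidden triple and no repeated pairing. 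I expect the transitions involving the left super-games with $u_m$ and the entries into the penultimate and last time slots to be the most delicate, since there the generic alternation is broken and must be checked by hand; the explicit $n=8$ schedule in Table~\ref{ansexample} serves as a concrete sanity check that these seams behave as claimed.
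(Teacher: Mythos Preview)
Your proposal is correct and follows essentially the same route as the paper: verify the five feasibility properties, dispose of direct-traveling, fixed-game-value, fixed-game-time, and no-repeat quickly via the circle method plus inspection of the extension patterns, and then put all the work into bounded-by-$2$ by recording each team's $H/A$ string per block type and checking the seams. The paper streamlines your seam enumeration with one structural observation you may want to use: for a white super-team $u_j$ ($j\neq m$) the direction of its incident edge flips \emph{only} at its left super-game, so its four-game block is $AAHH$ (or $HHAA$) in every slot until the left slot, where it is $AHHA$ (or $HAAH$), and then the opposite constant afterward---this collapses most transitions to a single case. The remaining boundaries (into the penultimate and last slots) are handled, as you anticipated, by an explicit ten-day table.
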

\begin{proof}
By the definition of feasible schedules, we need to prove the five properties: fixed-game-value, fixed-game-time, direct-traveling, no-repeat, and bounded-by-$k$.

The first two properties -- fixed-game-value and fixed-game-time are easy to see.
Each super-game in the first $m-2$ time slots will be extended to eight normal games on four days and each team participates in four games on four days. Each super-game in the last time slot will be extended to twelve normal games on six days and each team participates in six games on six days. So each team plays $2(n-1)$ games on $2(n-1)$ different days. Since there is a super-game between each pair super-teams, it is also easy to see that each team pair plays exactly two games, one at the home venue of each team. For the third property, we assume that the itinerary obeys the direct-traveling property and it does not need to be proved.

It is also easy to see that each team will not violate the no-repeat property.
In any time slot, no two normal games between the same pair of normal teams are arranged on two consecutive days according to the ways to extend super-games to normal games. For two days of two different time slots, each super-team will play against a different super-team and then a normal team will also play against a different normal team.

Last, we prove that each team does not violate the bounded-by-$k$ property. We will use `$H$' and `$A$' to denote a home game and an away game, respectively. We will also let $\overline{H}=A$ and $\overline{A}=H$.

First, we look at the games in the first $m-3$ time slots, i.e., the first $2n-12$ days. For the two teams in $u_m$, the four games in the first time slot will be $HHAA$, in an even time slot will be $HAAH$ (see Figure~\ref{figa7}), and in an odd time slot (not including the first time slot) will be $AHHA$. So two consecutive time slots can combine well without creating three consecutive home/away games.

Next, we consider a team $t_i$ in $u_j$ $(j\in \{1,2,\dots,m-1\})$.
For a normal super-game involving $u_j$, if the direction of the edge (the normal super-game) is from $u_j$ to another super-team, then the corresponding four games including $t_i$ will be $AAHH$, and if the direction of the edge is reversed, then the corresponding four games including $t_i$ will be $\overline{AAHH}=HHAA$.
For a left super-game involving $u_j$, if the direction of the edge (the normal super-game) is from $u_j$ to another super-team, then the corresponding four games including $t_i$ will be $AHHA$, and if the direction of the edge is reversed, then the corresponding four games including $t_i$ will be $\overline{AHHA}=HAAH$.
Note that in the first $m-3$ time slots the direction of the edge incident on super-team $u_j$ will only change after the left super-game. So two consecutive time slots can combine well without creating three consecutive home/away games.

Finally, we consider the last ten days in the last two time slots (time slots $m-2$ and $m-1$).
For the sake of presentation, we let $t_{i_1}=t_{2i-1}$ and $t_{i_2}=t_{2i}$.
We just list out the last ten games in the last two time slots for each team, which are shown in Figure~\ref{figa10}.
There are four different cases for the last ten games: $u_o$, $u_e$, $u_2$, and $u_m$, where $o\in \{3,\dots, m-1\}$ and $e\in \{1\}\cup \{4,\dots, m-2\}$.

\begin{figure}[ht]
    \centering
    \includegraphics[scale=0.7]{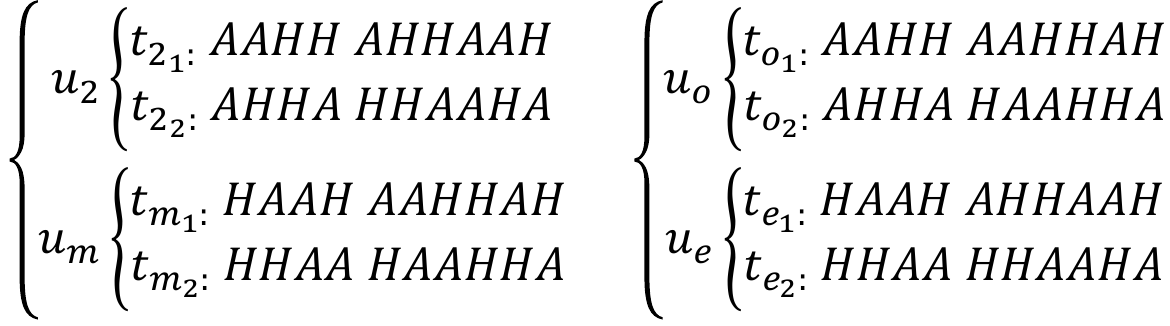}
    \caption{The last ten games for the case of $n\equiv 0 \pmod 4$, where $o\in \{3,\dots, m-1\}$ and $e\in \{1\}\cup \{4,\dots, m-2\}$}
    \label{figa10}
\end{figure}

From Figure~\ref{figa10}, we can see that there are no three consecutive home/away games in the last ten days. It is also easy to see that on day $2n-12$ (the last day in time slot $m-3$), the games for $t_{2_1}$ and $t_{2_2}$ (in $u_2$) are $H$, the games for $t_{m_1}$ and $t_{m_2}$ (in $u_m$) are $A$, the games for $t_{o_1}$ and $t_{o_2}$ (in $u_o$) are $H$, and the games for $t_{e_1}$ and $t_{e_2}$ (in $u_e$) are $A$. So time slots $m-3$ and $m-2$ can also combine
well without creating three consecutive home/away games.

Thus, the bounded-by-$k$ property also holds.
Since our schedule satisfies all the five properties of feasible schedules, we know our schedule is feasible for TTP-2.
\end{proof}

\subsection{Analyzing the Approximation Quality}
To show the quality of our schedule, we compare it with the independent lower bound. We will check the difference between our itinerary of each team $t_i$ and the optimal itinerary of $t_i$ and compute the expected extra cost.
As mentioned in the last paragraph of Section~\ref{sec_pre}, we will compare some sub itineraries of a team.
According to the construction,  we can see that all teams stay at home before the first game in a super-game and return home after the last game in the super-game.
Hence, we will look at the sub itinerary of a team on the four or six days in a super-game, which is coincident with a sub itinerary of the optimal itinerary.
In our algorithm, there are four kinds of super-games: normal super-games, left super-games, penultimate super-games and last super-games. We analyze the total expected extra cost of all normal teams caused by each kind of super-game.

\begin{lemma}\label{extra}
Assume there is a super-game between super-teams $u_i$ and $u_j$ at the home of $u_j$ in our schedule.
\begin{enumerate}
\item [(a)] If the super-game is a normal super-game, then the expected extra cost of all normal teams in $u_i$ and $u_j$ is 0;
\item [(b)] If the super-game is a left super-game, then the expected extra cost of all normal teams in $u_i$ and $u_j$ is at most $\frac{4}{n(n-2)}\LB$;
\item [(c)] If the super-game is a penultimate/last super-game, then the expected extra cost of all normal teams in $u_i$ and $u_j$ is at most $\frac{2}{n(n-2)}\LB$.
\end{enumerate}
\end{lemma}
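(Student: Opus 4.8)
The plan is to compare, super-game by super-game, the road trips our schedule assigns to the four involved normal teams during the four (or six) days against the \emph{coincident} portion of their optimal itineraries. Recall from the analysis behind Lemma~\ref{perfecti+} that the optimal itinerary of any team $t_a$ decomposes into road trips that visit each opposing super-team $\{t_{2j-1},t_{2j}\}$ as a single triangle of cost $D_{a,2j-1}+D_{2j-1,2j}+D_{2j,a}$, together with one two-cycle of cost $2D_{a,a'}$ to its own partner $t_{a'}$. Thus for a super-game between $u_i$ and $u_j$, the coincident optimal sub-itinerary of each visiting team is exactly such a triangle (plus, in the last super-game, the partner two-cycle). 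I would read off from the construction (Figures~\ref{figa6}--\ref{figa9}) the home/away pattern of each of the four teams, reconstruct the actual road trips they make, and bound the extra cost as the actual cost minus this optimal triangle cost.

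For a normal super-game the feasibility analysis already records the patterns $AAHH$ and $HHAA$. In either pattern the two away games of a team fall on two \emph{consecutive} days and are played against the two teams of the single opposing super-team; hence that team's away trip is precisely the optimal triangle that visits $t_{2j-1}$ and $t_{2j}$ in one trip, while on its home days it stays home exactly as in the optimum. Both teams of $u_i$ and both teams of $u_j$ therefore incur zero extra cost, deterministically, which gives part~(a).

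For a left super-game the opposing super-team is $u_m$, and the pattern of the two teams of $u_i$ is $AHHA$, so their single visit to $u_m$ is \emph{split} into two separate two-cycles: the actual cost for $t_{2i-1}$ is $2D_{2i-1,2m-1}+2D_{2i-1,2m}$ against the optimal triangle $D_{2i-1,2m-1}+D_{2m-1,2m}+D_{2m,2i-1}$, so its extra cost is $D_{2i-1,2m-1}+D_{2i-1,2m}-D_{2m-1,2m}\le D_{2i-1,2m-1}+D_{2i-1,2m}$, a sum of two edges of $G-M$. The teams of $u_m$ keep the pattern $HAAH$ with consecutive away days, so their visit to $u_i$ is still a single triangle and costs nothing extra. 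Summing over both teams of $u_i$ gives four $G-M$ edges; by linearity of expectation and Lemma~\ref{core} the expected total is at most $\frac{4}{n(n-2)}\LB$, which is part~(b). The penultimate and last super-games are handled the same way: I would extract the road trips from Figures~\ref{figa8}--\ref{figa9}, observe that only one split visit survives (and, for the last super-game, that the added partner games form exactly the optimal two-cycle of cost $2D_{2i-1,2i}$ and so contribute nothing extra), obtain an extra cost dominated by two $G-M$ edges, and again invoke Lemma~\ref{core} to get $\frac{2}{n(n-2)}\LB$, which is part~(c).

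Throughout, the matching-edge term $D_{2j-1,2j}$ that I subtract is nonnegative, so dropping it keeps the bound a clean nonnegative combination of $G-M$ edge weights, which is exactly the form Lemma~\ref{core} controls (each such edge has expectation at most $\frac{1}{n(n-2)}\LB$). The main obstacle is the bookkeeping for the last super-game: its twelve games over six days give a more intricate road-trip decomposition than the four-day cases, and the crux is to verify that this decomposition really leaves only two deviating $G-M$ edges while the partner two-cycles and every remaining trip coincide with the optimum. Establishing that exact count, rather than a loose bound, is what pins the bound at $\frac{2}{n(n-2)}\LB$ instead of something larger.
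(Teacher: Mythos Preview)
Your approach is the same as the paper's, and parts~(a) and~(b) are carried out exactly as there. For part~(c) your sketch is right in spirit, but one detail of the last super-game is not as you guess: for the team $t_{2i-1}$ (pattern $AAHHAH$) the partner game is \emph{not} a separate two-cycle but is merged with one opponent visit into the triangle $(t_{2i-1},t_{2i},t_{2j-1},t_{2i-1})$, leaving a second trip $(t_{2i-1},t_{2j},t_{2i-1})$; comparing with the optimal pair $(t_{2i-1},t_{2j-1},t_{2j},t_{2i-1})$ and $(t_{2i-1},t_{2i},t_{2i-1})$ still gives extra cost $D_{2i-1,2j}+D_{2i,2j-1}-D_{2i-1,2i}-D_{2j-1,2j}$, i.e.\ two $G-M$ edges minus two matching edges, so your count and the bound $\frac{2}{n(n-2)}\LB$ are correct. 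The other three teams in both the penultimate and the last super-game match their optimal sub-itineraries exactly, as you anticipate.
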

\begin{proof}
From Figure~\ref{figa6} we can see that in a normal super-game any of the four normal teams will visit the home venue of the two normal teams in the opposite super-team in one road trip. So they have the same road trips as that in their optimal itineraries.
The extra cost is 0. So (a) holds.

Next, we assume that the super-game is a left super-game. We mark here that one super-team is $u_m$.
From Figure~\ref{figa7} we can see that the two teams in the super-team $u_i$ play $AHHA$ on the four days (Recall that $A$ means an away game and $H$ means a home game), and the two teams in the super-team $u_j$ play $HAAH$.
The two teams in $u_j$ will have the same road trips as that in the optimal itinerary and then the extra cost is 0.
We compare the road trips of the two teams $t_{2i-1}$ and $t_{2i}$ in $u_i$ with their coincident sub itineraries of their optimal itineraries. In our schedule, team $t_{2i-1}$ contains two road trips $(t_{2i-1},t_{2j-1},t_{2i-1})$ and $(t_{2i-1},t_{2j},t_{2i-1})$ while the coincident sub itinerary of its optimal itinerary contains one road trip $(t_{2i-1},t_{2j-1},t_{2j},t_{2i-1})$, and team $t_{2i}$ contains two road trips $(t_{2i},t_{2j},t_{2i})$ and $(t_{2i},t_{2j-1},t_{2i})$ while the coincident sub itinerary of its optimal itinerary contains one road trip $(t_{2i},t_{2j-1},t_{2j},t_{2i})$. The expected difference is
\[
\EE{D_{2i-1,2j-1}+D_{2i-1,2j}+D_{2i,2j-1}+D_{2i,2j}-2D_{2j-1,2j}}\leq\frac{4}{n(n-2)}\LB,
\]
since $\EE{D_{2i-1,2j-1}}=\EE{D_{2i-1,2j}}=\EE{D_{2i,2j-1}}=\EE{D_{2i,2j}}\leq\frac{1}{n(n-2)}\LB$ by Lemma~\ref{core}. So (b) holds.

Then, we suppose that the super-game is a penultimate super-game. From Figure~\ref{figa8} we can see that the two teams $t_{2i-1}$ and $t_{2i}$ in the super-team $u_i$ play $AAHH$ and $AHHA$, respectively, and the two teams $t_{2j-1}$ and $t_{2j}$ in the super-team $u_j$ play $HAAH$ and $HHAA$, respectively.
Teams $t_{2i-1}$, $t_{2j-1}$, and $t_{2j}$ will have the same road trip as that in their optimal itineraries and then the extra cost is 0.
We compare the road trips of team $t_{2i}$ with its optimal road trip. In our schedule, team $t_{2i}$ contains two road trips $(t_{2i},t_{2j},t_{2i})$ and $(t_{2i},t_{2j-1},t_{2i})$ while the coincident sub itinerary of its optimal itinerary contains one road trip $(t_{2i},t_{2j-1},t_{2j},t_{2i})$.
By Lemma~\ref{core}, we can get the expected difference is
\[
\EE{D_{2i,2j-1}+D_{2i,2j}-D_{2j-1,2j}}\leq \frac{2}{n(n-2)}\LB.
\]
Last, we consider it as a last super-game.
From Figure~\ref{figa9} we can see that the two teams $t_{2i-1}$ and $t_{2i}$ in the super-team $u_i$ play $AAHHAH$ and $HAAHHA$, respectively, and the two teams $t_{2j-1}$ and $t_{2j}$ in the super-team $u_j$ play $AHHAAH$ and $HHAAHA$, respectively.
Teams $t_{2i}$, $t_{2j-1}$, and $t_{2j}$ will have the same road trips as that in their optimal itineraries and then the extra cost is 0.
We compare the road trips of team $t_{2i-1}$ with its optimal road trips. In our schedule, team $t_{2i-1}$ contains two road trips $(t_{2i-1},t_{2i},t_{2j-1},t_{2i-1})$ and $(t_{2i-1},t_{2j},t_{2i-1})$ while the coincident sub itinerary of its optimal itinerary contains two road trip $(t_{2i-1},t_{2j-1},t_{2j},t_{2i-1})$ and $(t_{2i-1},t_{2i},t_{2i-1})$.
By Lemma~\ref{core}, the team $t_{2i-1}$ in $u_i$ has an expected extra cost of
\[
\EE{D_{2i-1,2j}+D_{2i,2j-1}-D_{2i-1,2i}-D_{2j-1,2j}}\leq \frac{2}{n(n-2)}\LB.
\]
So (c) holds.
\end{proof}

In our schedule, there are $\frac{m}{2}+(m-4)(\frac{m}{2}-1)$ normal super-games, which contribute 0 to the expected extra cost.
There are $m-4$ left super-games in the $m-4$ middle time slots, $\frac{m}{2}$ penultimate super-games in the penultimate time slot, and $\frac{m}{2}$ last super-games in the last time slot.
By Lemma~\ref{extra}, we know that the total expected extra cost is
\[
(m-4)\times\frac{4}{n(n-2)}\LB+\lrA{\frac{m}{2}+\frac{m}{2}}\times\frac{2}{n(n-2)}\LB=\lrA{\frac{3}{n}-\frac{10}{n(n-2)}}\LB.
\]

Dominated by the time of computing a
minimum weight perfect matching~\cite{gabow1974implementation,lawler1976combinatorial}, the running time of the algorithm is $O(n^3)$. We can get that

\begin{theorem}\label{res_1}
For TTP-$2$ with $n$ teams, when $n\geq 8$ and $n\equiv 0 \pmod 4$, there is a randomized $O(n^3)$-time algorithm with an expected approximation ratio of $1+{\frac{3}{n}}-{\frac{10}{n(n-2)}}$.
\end{theorem}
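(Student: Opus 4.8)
The plan is to assemble the theorem from the two ingredients already established: the feasibility of the construction (Theorem~\ref{feas1}) and the per-super-game extra-cost bounds (Lemma~\ref{extra}). First I would invoke Theorem~\ref{feas1} to guarantee that, for $n\equiv 0 \pmod 4$ and $n\geq 8$, the construction (with the random ordering of Steps~1--4) outputs a feasible double round-robin schedule. The approximation guarantee then reduces entirely to bounding the \emph{expected} total travel distance of this schedule against the independent lower bound $\LB$ from (\ref{eqn_lowerbound}).

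The central structural fact I would make precise is that the total cost of our schedule equals $\LB$ plus the sum, over all super-games, of the extra cost incurred within that super-game. This rests on the observation noted just before Lemma~\ref{extra}: every team is at home immediately before the first game and after the last game of each super-game, so each team's itinerary splits into blocks of complete road trips, one block per super-game, and each such block is coincident with the corresponding block of that team's optimal itinerary (in which each road trip visits exactly one matched super-team). Since no road trip straddles a super-game boundary, these coincident blocks partition both itineraries, and by linearity of expectation the expected extra cost of the whole schedule is the sum of the expected per-super-game extra costs controlled by Lemma~\ref{extra}.

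Next I would count super-games by type over the $m-1$ time slots, $\tfrac{m}{2}$ per slot. Normal super-games contribute $0$ by Lemma~\ref{extra}(a). The middle time slots $2,\dots,m-3$ each contain exactly one left super-game, giving $m-4$ of them, each costing at most $\tfrac{4}{n(n-2)}\LB$ by Lemma~\ref{extra}(b). Time slot $m-2$ supplies $\tfrac{m}{2}$ penultimate and time slot $m-1$ supplies $\tfrac{m}{2}$ last super-games, each at most $\tfrac{2}{n(n-2)}\LB$ by Lemma~\ref{extra}(c). Summing and using $m=n/2$,
\[
(m-4)\cdot\frac{4}{n(n-2)}\LB + m\cdot\frac{2}{n(n-2)}\LB = \frac{6m-16}{n(n-2)}\LB = \lrA{\frac{3}{n}-\frac{10}{n(n-2)}}\LB.
\]
Hence the expected total cost is at most $\lrA{1+\frac{3}{n}-\frac{10}{n(n-2)}}\LB$, and since $\LB\leq\OPT$ the claimed expected approximation ratio follows. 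The $O(n^3)$ running time is dominated by computing the minimum weight perfect matching $M$ in Step~1, the remaining construction being a linear-time fill of the schedule table.

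The main obstacle here is not any single hard estimate but getting the bookkeeping exactly right, since all the genuinely difficult work lives in Theorem~\ref{feas1} and Lemma~\ref{extra}. I would carefully verify that the block decomposition is legitimate (each team ends every super-game at home, so road trips never cross boundaries), that the counts of left, penultimate, and last super-games match the construction in each time slot, and that the coincident optimal blocks truly partition each optimal itinerary so the per-super-game extra costs sum to the global extra cost with no omission or double counting. The last super-game is the delicate case, as its six-day block also absorbs the intra-super-team home-and-away games; I would confirm that its coincident optimal sub-itinerary (the triangle to the opposing super-team together with the single visit to the partner) is accounted for cleanly before combining all contributions.
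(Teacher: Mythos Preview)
Your proposal is correct and follows essentially the same approach as the paper: invoke Theorem~\ref{feas1} for feasibility, apply Lemma~\ref{extra} to each super-game type, count $m-4$ left, $m/2$ penultimate, and $m/2$ last super-games, sum to $\bigl(\tfrac{3}{n}-\tfrac{10}{n(n-2)}\bigr)\LB$, and attribute the $O(n^3)$ runtime to the matching computation. Your explicit justification of the block decomposition (each team at home at every super-game boundary, so the per-super-game extra costs partition the total extra cost) is spelled out more carefully than in the paper, but the argument is the same.
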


Next, we show that the analysis of our algorithm is tight, i.e., the approximation ratio in Theorem~\ref{res_1} is the best for our algorithm. We show an example that the ratio can be reached.

In the example, the distance of each edge in the minimum perfect matching $M$ is 0 and the distance of each edge in $G-M$ is 1.
We can see that the triangle inequality property still holds.
By (\ref{eqn_lowerbound}), we know that the independent lower bound of this instance is
\[
\LB=2D_G+nD_M=2\size{G-M}=n(n-2).
\]

In this instance, the extra costs of a normal super-game, left super-game, penultimate super-game and last super-game are 0, 4, 2 and 2, respectively.
In our schedule, there are $m-4$ left super-games, $\frac{m}{2}$ penultimate super-games and $\frac{m}{2}$ last super-games in total.
Thus, the total extra cost of our schedule is  $(m-4)\times4+(\frac{m}{2}+\frac{m}{2})\times2=3n-16$. Thus, the ratio is
\[
1+\frac{3n-16}{n(n-2)}=1+\frac{3}{n}-\frac{10}{n(n-2)}.
\]

This example only shows the ratio is tight for this algorithm. However, it is still possible that some other algorithms can achieve a better ratio.

\subsection{Techniques for Further Improvement}
From Lemma~\ref{extra}, we know that there are three kinds of super-games: left super-games, penultimate super-games and last super-games that can cause extra cost. If we can reduce the total number of these super-games, then we can improve our schedule.
Based on this idea, we will apply divide-and-conquer to our packing-and-combining method which can reduce the total number of left super-games for some cases.

We first extend the packing-and-combining method. Suppose $n\geq 8p$ and $n\equiv 0 \pmod {4p}$, then we pack $p$ super-teams as a \emph{group-team}. There are $g=\frac{n}{2p}$ group-teams. Since the previous construction corresponds to the case $p=1$, we assume $p\geq 2$ here.

Similar to the previous construction, there are $g-1$ group-slots and each group-slot contains $\frac{g}{2}$ \emph{group-games}. But, there are only three kinds of group-games: normal, left, and last.
In the first group-slot, all group-games are normal group-games. In the middle $g-3$ group-slots, there is always one left group-game and $\frac{g}{2}-1$ normal group-games. In the last group-game, there are $\frac{g}{2}$ last group-games.

Next, we show how to extend these three kinds of group-games.
We assume that there is a group-game between two group-teams $U_i$ and $U_j$ at the home of $U_j$, where $U_i=\{u_{i_1},\dots,u_{i_p}\}$ and $U_j=\{u_{j_1},\dots,u_{j_p}\}$.

\textbf{Normal group-games:}
For a normal group-game, we extend it into $p$ time slots where in the $l$-th time slot, there are $p$ normal super-games: $\{u_{i_{i'}}\rightarrow u_{j_{(i'+l-2 \bmod p)+1}}\}_{i'=1}^{p}$.
Hence, there are $p^2$ normal super-games in a normal group-game. According the design of normal super-games, all games between group-teams $U_i$ and $U_j$ (i.e., between one team in a super-team in $U_i$ and one team in a super-team in $U_j$) are arranged.
Note that all super-teams in $U_i$ (resp., $U_j$) play away (resp., home) super-games.

\textbf{Left group-games:}
For a left group-game, we also extend it into $p$ time slots and the only difference with the normal group-game is that the $p$ super-games of the left group-game in the $p$-th time slot are $p$ left super-games. Hence, there are $p$ left super-games and $p(p-1)$ normal super-games in a left group-game.
According the design of normal/left super-games, all games between group-teams $U_i$ and $U_j$ are arranged.
Note that all super-teams in $U_i$ (resp., $U_j$) play away (resp., home) super-games.

\textbf{Last group-games:}
For a last group-game, we need to arrange all games inside each group-team as well as between these two group-teams, which form a double round-robin for teams in these two group-teams. Similar with the normal and left group-games, the super-teams in $U_i$ are ready to play away normal super-games while the super-teams in $U_j$ are ready to play home normal super-games.
Both of the general construction ($p\geq2$) and the previous construction ($p=1$) start with normal super-games.
Thus, the last group-game between $U_i$ and $U_j$ can be seen as a sub-problem of our construction with $4p$ teams, which can be seen as a divided-and-conquer method. Note that in this sub-problem we need to make sure that the super-teams in $U_i$ start with away super-games and the super-teams in $U_j$ start with home super-games.

Before the last group-slot, there are $(g-3)\times p=\frac{n}{2}-3p$ left super-games in total.
In the last group-slot, there are $\frac{g}{2}=\frac{n}{4p}$ last group-games. Therefore, if $n\geq 8p$ and $n\equiv 0 \pmod{4p}$, then the minimum total number of left super-games in our construction of packing $p$ super-teams, denoted by $L_p(n)$, satisfies that

\begin{eqnarray}\label{formula}
L_p(n)=\frac{n}{2}-3p+\frac{n}{4p}\min_{1\leq i< p} L_{i}(4p).
\end{eqnarray}

Since the framework of our general construction is the same as our initial construction, the correctness is easy to observe. Here, we show an example of our general construction on an instance of $n=16$ and $p=2$ in Figure~\ref{figa11}.

\begin{figure}[ht]
    \centering
    \includegraphics[scale=0.55]{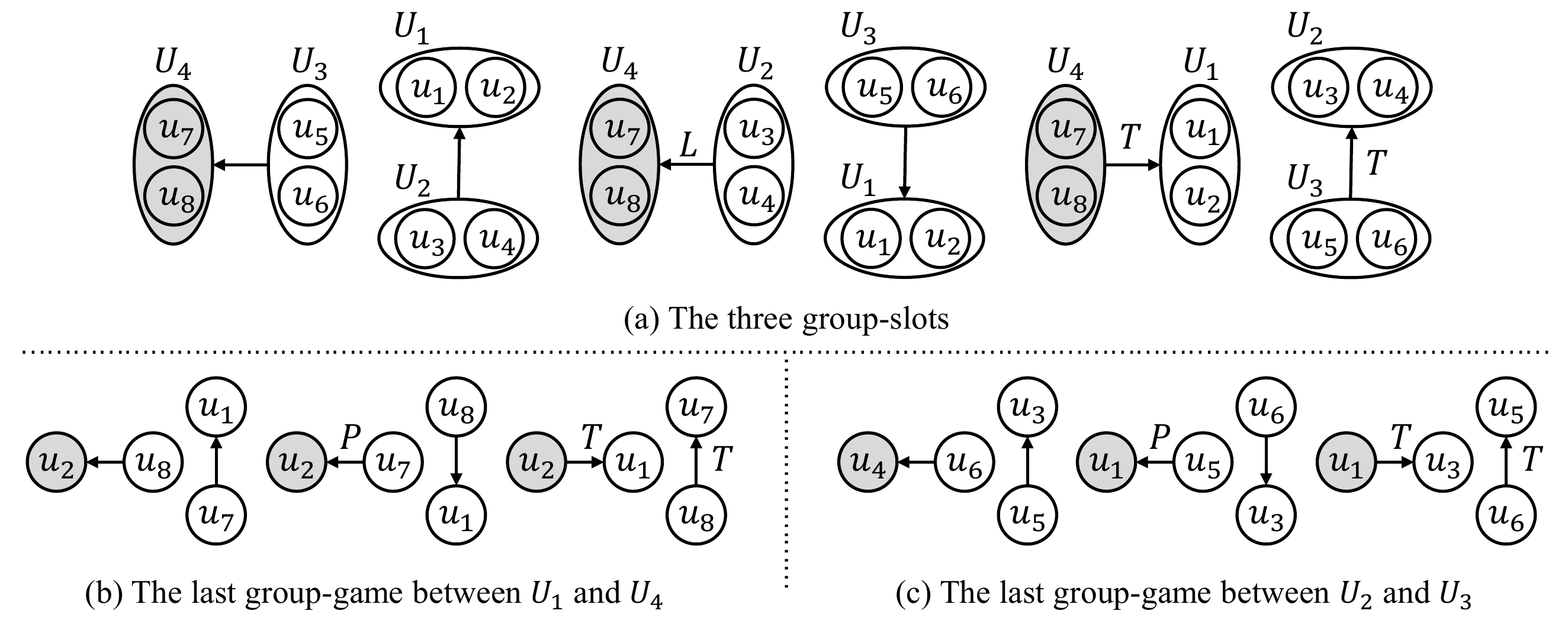}
    \caption{
    An illustration of our general construction on an instance of $n=16$ and $p=2$: the general construction contains $g-1=\frac{n}{2p}-1=3$ group-slots as shown in (a); the instance in the last time slot contains $\frac{g}{2}=\frac{n}{4p}=2$ last group-games, which can be seen two sub-problems of TTP-2 with $4p=8$ teams; each sub-problem uses the initial construction as shown in (b) and (c)}
    \label{figa11}
\end{figure}

Since the sub-problem will always reduce to the case $p=1$, the total number of penultimate/last super-games keeps unchanged. Hence, by the previous analysis, the total expected extra cost of $m/2$ penultimate super-games and $m/2$ last super-games is still $(\frac{m}{2}+\frac{m}{2})\times\frac{2}{n(n-2)}\LB=\frac{n}{n(n-2)}\LB$.
To analyze the approximation quality of this general construction, we only need to compute the number of left super-games used, which depends on the value $p$. The minimum number of left super-games, denoted by $L(n)$, satisfies that
\[
L(n)=\min_{\substack{p: n\geq 8p,\\n\bmod {4p}=0 }}L_p(n).
\]
By Lemma~\ref{core}, the expected extra cost of one left super-game is $\frac{4}{n(n-2)}\LB$.
Therefore, the expected extra cost of $L(n)$ left super-games is $\frac{4L(n)}{n(n-2)}\LB$. The expected approximation ratio of our algorithm is
\[
1+\frac{4L(n)+n}{n(n-2)}.
\]
Note that the value $L(n)$ can be computed in $O(n^2)$ by using a dynamic programming method.

\begin{theorem}
For TTP-$2$ with $n$ teams, there is a randomized $O(n^3)$-time algorithm with an expected approximation ratio of $1+\frac{4L(n)+n}{n(n-2)}$, where $L(n)\leq \frac{n}{2}-4$ for $n\geq 8$ and $n\equiv 0 \pmod 4$, and $L(n)\leq \frac{n}{2}-6$ for $n\geq 16$ and $n\equiv 0 \pmod 8$, i.e., the expected approximation ratio is at most $1+\frac{3}{n}-\frac{10}{n(n-2)}$ for the former case and $1+\frac{3}{n}-\frac{18}{n(n-2)}$ for the latter case.
\end{theorem}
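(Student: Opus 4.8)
The plan is to split the statement into two essentially independent parts: the general ratio formula $1+\frac{4L(n)+n}{n(n-2)}$, which the preceding discussion has already almost established, and the two numerical bounds on $L(n)$, which carry the real content. For the formula I would invoke Lemma~\ref{extra}: a normal super-game contributes $0$, each left super-game contributes at most $\frac{4}{n(n-2)}\LB$, and each penultimate or last super-game contributes at most $\frac{2}{n(n-2)}\LB$ to the expected extra cost. Since the general construction alters only the number of left super-games while the $\frac{m}{2}$ penultimate and $\frac{m}{2}$ last super-games are untouched, the total expected extra cost is $\frac{4L(n)}{n(n-2)}\LB+\lrA{\frac{m}{2}+\frac{m}{2}}\frac{2}{n(n-2)}\LB=\frac{4L(n)+n}{n(n-2)}\LB$, and dividing by $\LB$ gives the claimed expected ratio. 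The running time is dominated by the $O(n^3)$ minimum-matching computation of Step~1, and $L(n)$ itself is evaluated in $O(n^2)$ by dynamic programming on recurrence~\eqref{formula}.

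For the first bound I would simply take the admissible choice $p=1$. Because $L(n)=\min_p L_p(n)\le L_1(n)$ and the initial construction places exactly one left super-game in each of the $m-4$ middle time slots $2\le i\le m-3$, we have $L_1(n)=m-4=\frac{n}{2}-4$; the choice $p=1$ is admissible precisely when $n\ge 8$ and $n\equiv 0\pmod 4$, which yields $L(n)\le\frac{n}{2}-4$.

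The second bound is the heart of the argument and comes from the divide-and-conquer step with $p=2$, admissible exactly when $n\ge 16$ and $n\equiv 0\pmod 8$. Evaluating~\eqref{formula} at $p=2$ gives $L_2(n)=\frac{n}{2}-6+\frac{n}{8}\min_{1\le i<2}L_i(8)=\frac{n}{2}-6+\frac{n}{8}L_1(8)$, so everything reduces to the base value $L_1(8)$. The key observation is $L_1(8)=0$: for $n=8$ we have $m=4$, so the range $2\le i\le m-3$ of middle time slots is empty and the initial construction uses only normal, penultimate, and last super-games and no left super-game whatsoever. Hence $L_2(n)=\frac{n}{2}-6$ and $L(n)\le\frac{n}{2}-6$. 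Substituting the two bounds into the formula and simplifying, using $\frac{4(\frac{n}{2}-4)+n}{n(n-2)}=\frac{3n-16}{n(n-2)}=\frac{3}{n}-\frac{10}{n(n-2)}$ and $\frac{4(\frac{n}{2}-6)+n}{n(n-2)}=\frac{3n-24}{n(n-2)}=\frac{3}{n}-\frac{18}{n(n-2)}$, produces the two stated approximation ratios.

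The step I expect to be the main obstacle is not the arithmetic but verifying that recurrence~\eqref{formula} genuinely counts the left super-games and that the base case collapses as claimed. Concretely, I would need to confirm that each of the $\frac{g}{2}=\frac{n}{4p}$ last group-games in the final group-slot reduces to an independent $4p$-team instance solved by the initial construction, and that the required interface orientation---super-teams of $U_i$ starting with away super-games and those of $U_j$ with home super-games---is compatible with that construction, so that no additional left super-games are forced at the boundary between group-slots. Granting feasibility of the general construction, which is inherited from the initial construction because both begin from normal super-games and share the same framework (Theorem~\ref{feas1}), this reduction turns the $p=2$ subproblems into genuine $8$-team instances, and there $L_1(8)=0$ closes the argument.
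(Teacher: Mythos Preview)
Your proposal is correct and follows essentially the same approach as the paper: the ratio formula is read off from Lemma~\ref{extra} and the unchanged count of penultimate/last super-games, and the two bounds on $L(n)$ come from $L_1(n)=\frac{n}{2}-4$ together with the key base case $L_1(8)=0$, which via recurrence~\eqref{formula} yields $L_2(n)=\frac{n}{2}-6$. The paper's own proof is terser---it simply asserts $L_1(8)=0$ and substitutes---so your explicit explanation that the middle-time-slot range $2\le i\le m-3$ is empty when $m=4$ is a welcome elaboration, not a deviation.
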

\begin{proof}
According to our initial construction, we have $L(n)\leq L_1(n)=\frac{n}{2}-4$ for $n\geq 8$ and $n\equiv 0\pmod 4$. Thus, we have $L_1(8)=0$, and we can get $L(n)\leq L_2(n)=\frac{n}{2}-6$ for $n\geq 16$ and $n\equiv 0\pmod 8$.
For $n\geq 16$ and $n\equiv 0\pmod 8$, the general construction can reduce at least two left super-games. Hence, for this case, the expected approximation ratio of our algorithm is at most $1+\frac{3}{n}-\frac{18}{n(n-2)}$.
\end{proof}

We can reduce more than two left super-games for some other cases. For example, we can get $L(32)=L_4(32)=8$ and $L(64)=L_8(64)=24$, which has four less left super-games compared with $L_1(32)=12$ and $L_1(64)=28$.
Since the biggest instance on the benchmark is 40, we show an example on the number of left super-games before and after using divide-and-conquer (D\&C) for $n$ varying from $8$ to $40$ in Table~\ref{number}. It is worth noting that when $n>64$ our programming shows that the reduced number of left super-games is at most 2. Hence, we conjecture that $L(n)=L_2(n)=\frac{n}{2}-6$ for $n>64$ and $n\equiv 0\pmod 8$.
As $n$ goes larger, we can not reduce more left super-games.

\begin{table}[ht]
\centering
\begin{tabular}{c|cc}
\hline
  Data & Before & After\\
   size & D\&C & D\&C\\
\hline
  $n=40$ & 16 & $\mathbf{14}$\\
  $n=36$ & 14 & 14\\
  $n=32$ & 12 & $\mathbf{8}$ \\
  $n=28$ & 10 & 10\\
  $n=24$ & 8  & $\mathbf{6}$ \\
  $n=20$ & 6  & 6 \\
  $n=16$ & 4  & $\mathbf{2}$ \\
  $n=12$ & 2  & 2 \\
  $n=8$  & 0  & 0
\end{tabular}
\caption{Results on the number of left super-games}
\label{number}
\end{table}

The generalized algorithm does not work for $n\equiv 4\pmod 8$.
In Table~\ref{number}, we can see that the improved number of left super-games for $n=32$ (bigger case) is even less than the number for $n=28$ (smaller case).
We conjecture that there also exist a method to reduce the number of left super-games for $n\equiv 4\pmod 8$.

\section{The Construction for Odd $n/2$}
\subsection{Construction of the Schedule}
When $n/2$ being odd, the construction will be slightly different for $n\equiv 2 \pmod 8$ and $n\equiv 6 \pmod 8$. We will describe the algorithm for the case of $n\equiv 2 \pmod 8$. For the case of $n\equiv 6 \pmod 8$, only some edges will have different directions (games taking in the opposite place). These edges will be denoted by dash lines and we will explain them later.

In the schedule, there will be two special super-teams $u_{m-1}$ and $u_m$. For the sake of presentation, we will denote $u_l = u_{m-1}$ (resp., $u_r=u_{m}$), and denote the two teams in $u_l$ as $\{t_{l1}, t_{l2}\}$ (resp., the two teams in $u_r$ as $\{t_{r1}, t_{r2}\}$).

We first design the super-games between super-teams in the first $m-2$ time slots, and then consider super-games in the last time slot.
In each of the first $m-2$ time slots, we have $\frac{m-1}{2}$ super-games (note that $m$ is odd), where one super-game involves three super-teams and all other super-games involve two super-teams.

In the first time slot, the $\frac{m-1}{2}$ super-games are arranged as shown in Figure~\ref{figb1}.
The most right super-game involving $u_r$ is called the \emph{right super-game}. The right super-game is the only super-game involving three super-teams.
The other $\frac{m-1}{2}-1$ super-games are called \emph{normal super-games}. There are also directed edges in the super-games, which will be used to extend super-games to normal games.

\begin{figure}[ht]
    \centering
    \includegraphics[scale=0.55]{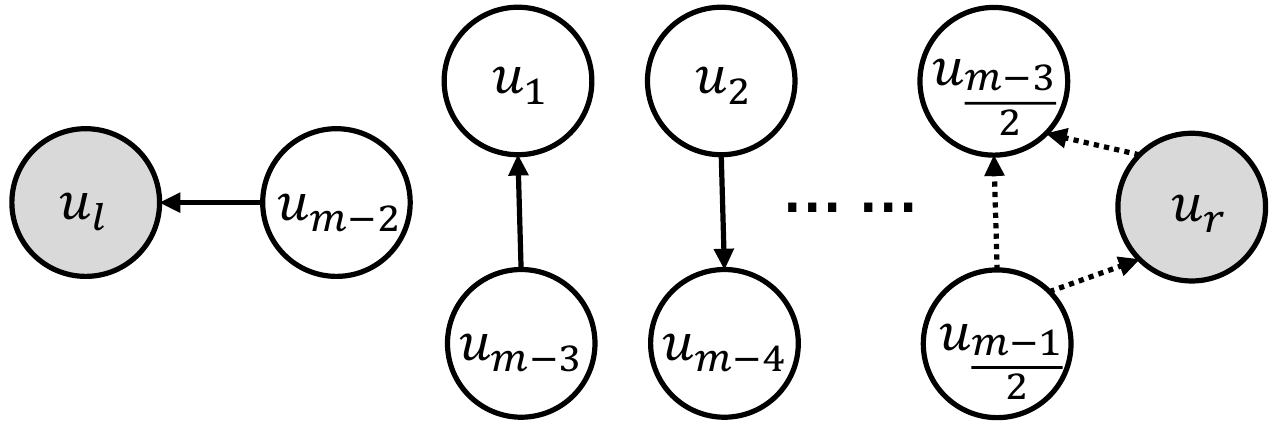}
    \caption{The super-game schedule in the first time slot}
    \label{figb1}
\end{figure}

Note that the dash edges in Figure~\ref{figb1} will be reversed for the case of $n\equiv 6\pmod 8$. We can also observe that the white nodes (super-teams $u_1, \dots, u_{m-2}$) in Figure~\ref{figb1} form a cycle $(u_1, u_2, \dots ,u_{m-2},u_1)$.
In the second time slot, super-games are scheduled as shown in Figure~\ref{figb2}. We change the positions of white super-teams in the cycle by moving one position in the clockwise direction, and also change the direction of each edge except for the most left edge incident on $u_l$. The super-game including $u_l$ is called a \emph{left super-game}. So in the second time slot, there is one left super-game, $\frac{m-1}{2}-2$ normal super-games and one right super-game.

\begin{figure}[ht]
    \centering
    \includegraphics[scale=0.55]{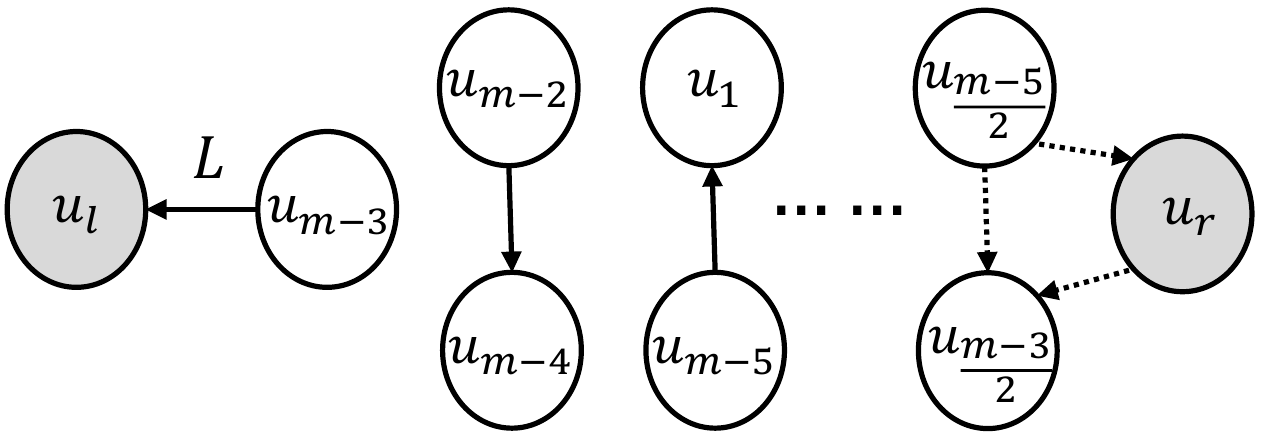}
    \caption{The super-game schedule in the second time slot}
    \label{figb2}
\end{figure}

In the third time slot, there is also one left super-game, $\frac{m-1}{2}-2$ normal super-games and one right super-game.
We also change the positions of white super-teams in the cycle by moving one position in the clockwise direction while the direction of each edge is reversed. The positions of the dark nodes will always keep the same.
An illustration of the schedule in the third time slot is shown in Figure~\ref{figb2+}.

\begin{figure}[ht]
    \centering
    \includegraphics[scale=0.55]{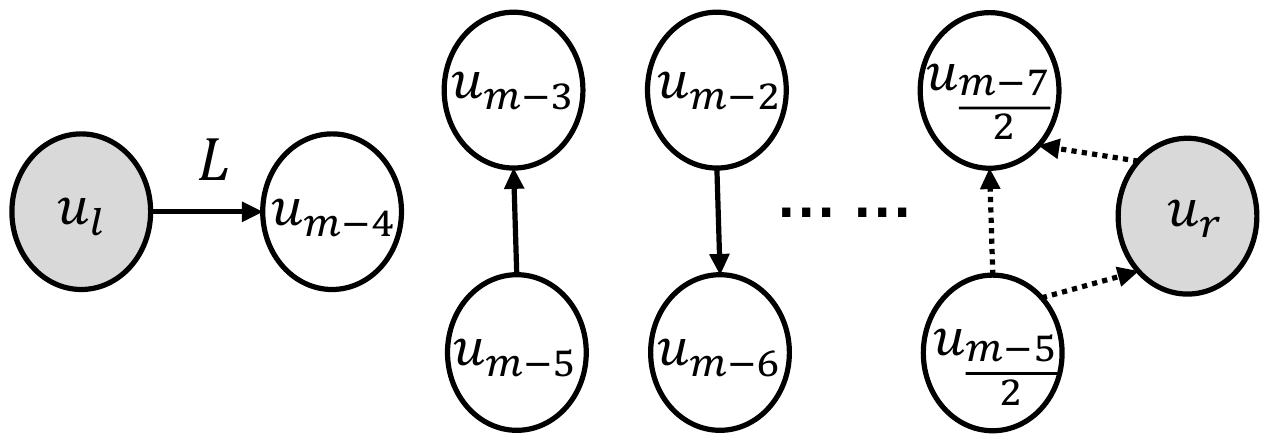}
    \caption{The super-game schedule in the third time slot}
    \label{figb2+}
\end{figure}

The schedules for the other middle time slots are derived analogously, however, in the time slot $m-2$, the left super-game will be special and we will explain it later. Next, we show how to extend the super-games in these time slots to normal games.

\textbf{Case~1. Normal super-games}:
We first consider normal super-games, each of which will be extended to four normal games on four days.
Assume that in a normal super-game, super-team $u_{i}$ plays against the super-team $u_{j}$ at the home venue of $u_j$ in time slot $q$ ($1\leq i,j \leq m-1$ and $1\leq q\leq m-2$). Recall that  $u_{i}$ represents normal teams $\{t_{2i-1}, t_{2i}\}$ and $u_{j}$ represents normal teams \{$t_{2j-1}, t_{2j}$\}. The super-game will be extended to eight normal games on four corresponding days from $4q-3$ to $4q$, as shown in Figure~\ref{figb3}. There is no difference compared with the normal super-games in Figure~\ref{figa6} for the case of even $n/2$.

\begin{figure}[ht]
    \centering
    \includegraphics[scale=1]{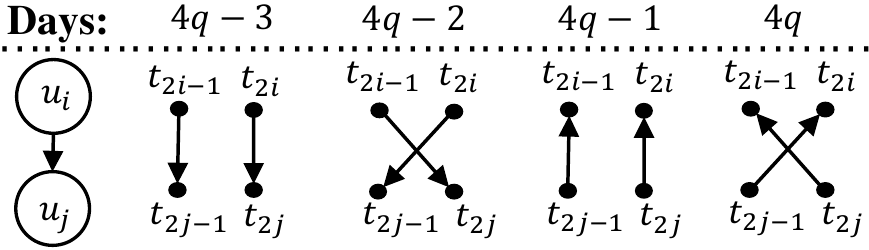}
    \caption{Extending normal super-games}
    \label{figb3}
\end{figure}

\textbf{Case~2. Left super-games}:
Assume that in a left super-game, super-team $u_{l}$ plays against the super-team $u_{i}$ at the home venue of $u_l$ in (even) time slot $q$ ($1\leq i\leq m-3$ and $2\leq q\leq m-2$). There are $m-3$ left super-games: the first $m-4$ left super-games in time slot $q$ with $2\leq q<m-2$ are (normal) left super-games; the left super-game in time slot $q$ with $q=m-2$ is a (special) left super-game.

We first consider normal left super-games.
Recall that $u_{l}$ represents normal teams \{$t_{l1}, t_{l2}$\} and $u_{i}$ represents normal teams \{$t_{2i-1}, t_{2i}$\}.
The super-game will be extended to eight normal games on four corresponding days from $4q-3$ to $4q$, as shown in Figure~\ref{figb4}, for even time slot $q$. Note that the direction of edges in the figure will be reversed for odd time slot $q$. There is no difference compared with the left super-games in Figure~\ref{figa7} for the case of even $n/2$.

\begin{figure}[ht]
    \centering
    \includegraphics[scale=1]{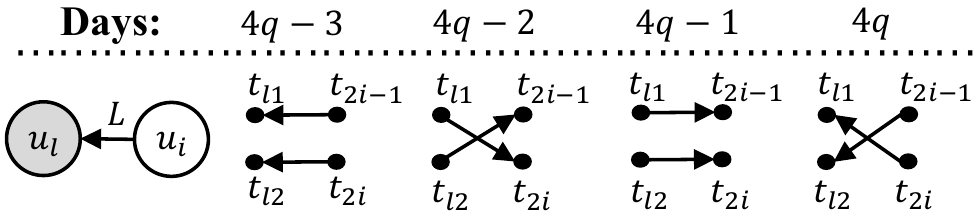}
    \caption{Extending left super-games}
    \label{figb4}
\end{figure}

In time slot $q=m-2$, there is a directed arc from super-team $u_l$ to super-team $u_1$. The super-game will be extended to eight normal games on four corresponding days from $4q-3$ to $4q$, as shown in Figure~\ref{figb4+}, where we put a letter `S' on the edge to indicate that the super-game is a (special) left super-game. There is no difference compared with the penultimate super-games in Figure~\ref{figa8} for the case of even $n/2$.

\begin{figure}[ht]
    \centering
    \includegraphics[scale=1]{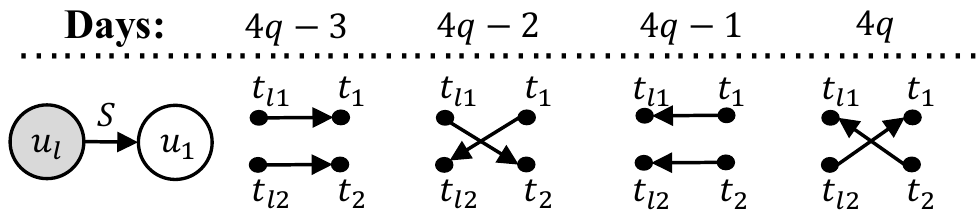}
    \caption{Extending the left super-game in the time slot $q=m-2$}
    \label{figb4+}
\end{figure}

\textbf{Case~3. Right super-games}:
Assume that in a right super-game, there are three super-teams $u_{i-1}$, $u_i$, and $u_{r}$ in time slot $q$ ($1\leq i\leq m-2$ and $1\leq q\leq m-2$) (we let $u_{0}=u_{m-2}$). Recall that $u_{r}$ represents normal teams \{$t_{r_1}, t_{r_2}$\}, $u_{i-1}$ represents normal teams \{$t_{2i-3}, t_{i-2}$\}, and $u_{i}$ represents normal teams \{$t_{2i-1}, t_{2i}$\}. The super-game will be extended to twelve normal games on four corresponding days from $4q-3$ to $4q$. Before extending the right super-games, we first introduce four types of right super-games $\{R_1,R_2,R_3,R_4\}$, as shown in Figure~\ref{figb5}. If the directions of the edges in $R_j$ are reversed, then we denote this type as $\overline{R_j}$.

\begin{figure}[ht]
    \centering
    \includegraphics[scale=1]{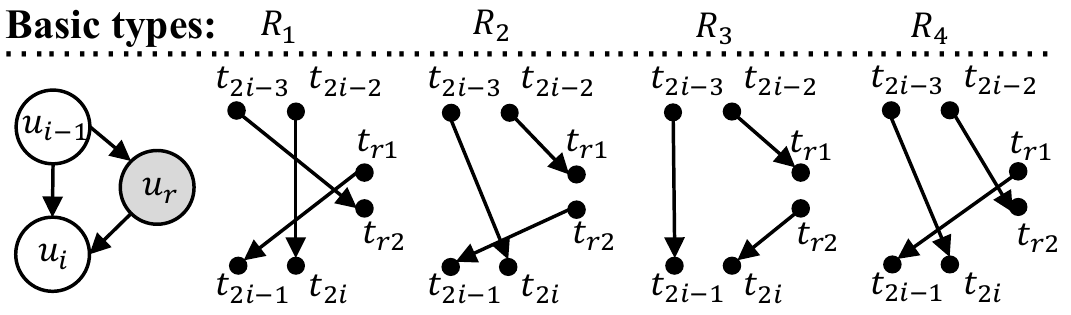}
    \caption{Four basic types of right super-games}
    \label{figb5}
\end{figure}

To extend right super-games, we consider three cases: $1\leq q\leq \frac{m-3}{2}$, $q=\frac{m-1}{2}$ and $\frac{m+1}{2}\leq q\leq m-2$.
For the case of $1\leq q\leq \frac{m-3}{2}$, if there is a directed edge from $u_{i-1}$ to $u_i$ (we know that $i-1$ is even and $i$ is odd), the four days of extended normal games are arranged as $R_4\cdot R_3\cdot \overline{R_4}\cdot \overline{R_3}$, otherwise, $\overline{R_2}\cdot \overline{R_1}\cdot R_2\cdot R_1$.
For the case of $q=\frac{m-1}{2}$, we know that there is a directed edge from $u_{m-2}$ to $u_1$ (both $m-1$ and $1$ are odd), the four days of extended normal games are arranged as $R_1\cdot R_3\cdot \overline{R_1}\cdot \overline{R_3}$.
For the case of $\frac{m+1}{2}\leq q\leq m-2$, if there is a directed edge from $u_{i-1}$ to $u_i$ (we know that $i-1$ is odd and $i$ is even), the four days of extended normal games are arranged as $R_1\cdot R_2\cdot \overline{R_1}\cdot \overline{R_2}$, otherwise, $\overline{R_3}\cdot \overline{R_4}\cdot R_3\cdot R_4$.

The design of right super-games has two advantages: First, the construction can always keep team $t_{r1}$ playing $AHHA$ and $t_{r2}$ playing $HAAH$ in each time slot, which can reduce the frequency of them returning home;
Second, we can make sure that the road trips of teams in the super-team with an index do not cause any extra cost in each time slot $q$.

\textbf{The last time slot}: Now we are ready to design the unarranged games on six days in the last time slot.
There are three kinds of unarranged games.
First, for each super-team $u_i$, by the design of left/normal/right super-games, the games between teams in $u_i$, $\{t_{2i-1}\leftrightarrow t_{2i}\}$, were not arranged.
Second, for super-teams $u_{i-1}$ and $u_i$ ($1\leq i\leq m-2$), there is a right super-game between them, and then there are two days of games unarranged.
Third, since there is no super-game between super-teams $u_l$ and $u_r$, we know that the four games between teams in $u_l$ and $u_r$ were not arranged.
Note that super-teams $(u_1,u_2,...,u_{m-2},u_1)$ form a cycle with an odd length. The unarranged games on the last six days can be seen in Figure~\ref{figb6}. We can see that each team has six unarranged games.

\begin{figure}[ht]
    \centering
    \includegraphics[scale=1]{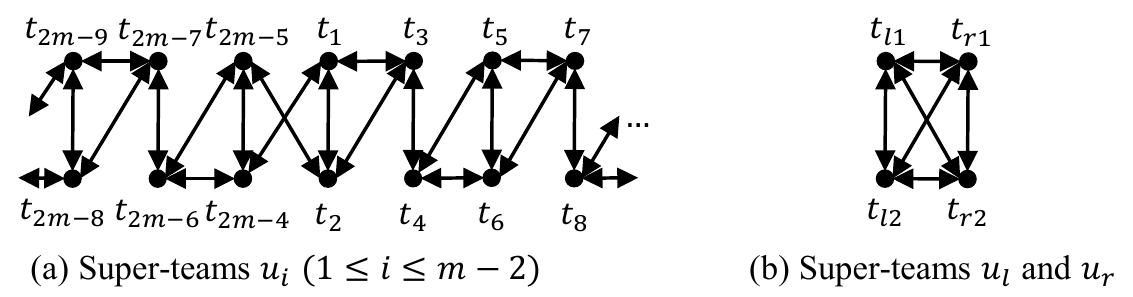}
    \caption{The left games on the last six days}
    \label{figb6}
\end{figure}

To arrange the remaining games, we introduce three days of games: $self$, $A_1$ and $A_2$. An illustration is shown in Figure~\ref{figb7}.

\begin{figure}[ht]
    \centering
    \includegraphics[scale=1]{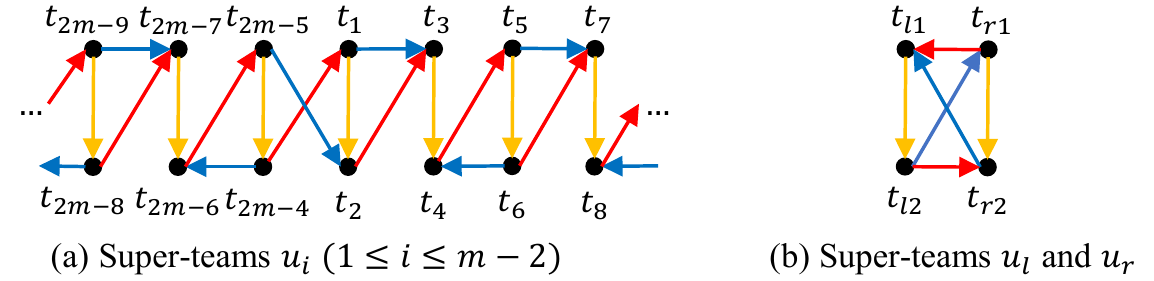}
    \caption{An illustration of the three days of games: the games in $self$ are denoted by orange edges; the games in $A_1$ are denoted by blue edges; the games in $A_2$ are denoted by red edges}
    \label{figb7}
\end{figure}

Note that we also use $\overline{self}$, $\overline{A_1}$, $\overline{A_2}$ to denote the day of games with the reversed directions in $self$, $A_1$, $A_2$, respectively, i.e., the partner is the same but the game venue changes to the other team's home. We can see that the unarranged games in Figure~\ref{figb6} can be presented by the six days of games $self \cup A_1 \cup A_2 \cup \overline{self} \cup \overline{A_1} \cup \overline{A_2}$.
Next, we arrange the six days $\{self,\overline{self},$ $A_1,A_2,$ $\overline{A_1},\overline{A_2}\}$ to combine the previous $2n-8$ days without violating the bounded-by-$k$ and no-repeat constraints. We consider two cases:
For super-teams $u_i$ ($1\leq i\leq m-2$), the six days are arranged in the order: $A_1\cdot self\cdot A_2\cdot \overline{A_1}\cdot \overline{A_2}\cdot \overline{self}$; For super-teams $u_l$ and $u_r$, the six days are arranged in the order: $self\cdot A_1\cdot A_2\cdot \overline{A_1}\cdot \overline{A_2}\cdot \overline{self}$.

We have described the main part of the scheduling algorithm. Next, we will prove its feasibility.

\begin{theorem}\label{feas2}
For TTP-$2$ with $n$ teams such that $n\geq 10$ and $n\equiv 2 \pmod 4$, the above construction can generate a feasible schedule.
\end{theorem}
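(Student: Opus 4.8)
The plan is to follow the same five-property checklist used in the proof of Theorem~\ref{feas1}, verifying fixed-game-value, fixed-game-time, direct-traveling, no-repeat, and bounded-by-$2$ in turn. Direct-traveling is assumed as before. Fixed-game-time is a counting check: each of the first $m-2$ time slots contributes four days and the last time slot contributes six days, for a total of $4(m-2)+6 = 2n-2$ days, with one game per team per day built into every extension figure.

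Fixed-game-value requires showing every ordered pair of teams meets exactly once at each venue. Here I would partition the pairs by how they are scheduled: pairs inside one super-team $\{t_{2i-1}, t_{2i}\}$, pairs realized by normal or left super-games, pairs realized by right super-games, and the four pairs between $u_l$ and $u_r$. The normal, left, and right super-game extensions (Figures~\ref{figb3}--\ref{figb5}) cover the between-super-team games of the rotating cycle, and the last six days (Figures~\ref{figb6}--\ref{figb7}, via $self$, $A_1$, $A_2$ and their reverses) are designed precisely to cover the leftover within-super-team games, the two leftover days of each right super-game, and the $u_l$--$u_r$ games. I would check that these classes are disjoint and exhaustive, so each ordered pair is played exactly once.

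No-repeat I would verify in two regimes. Within a single time slot, the extension patterns never schedule the same normal pair on two consecutive days, which is read directly from Figures~\ref{figb3}--\ref{figb5}. Across time slots, the clockwise rotation of the cycle $(u_1,\dots,u_{m-2})$ guarantees each super-team faces a new super-team, so the normal pairs change; the only delicate boundaries are the entry into the special left super-game at $q=m-2$ and the transition into the last six days, which I would check by hand.

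The main obstacle is the bounded-by-$2$ property. I would track the home/away string of every team and show no three consecutive equal symbols arise, treating the categories $u_i$ ($1 \le i \le m-2$), $u_l$, and $u_r$ separately, exactly as Figure~\ref{figa10} did in the even case. The key simplifications are the stated invariants that $t_{r1}$ plays $AHHA$ and $t_{r2}$ plays $HAAH$ in every time slot and that the indexed super-team of a right super-game incurs no boundary problem; combined with the $AAHH/HHAA/AHHA/HAAH$ patterns of normal and left super-games, consecutive middle time slots combine cleanly because the edge direction on a super-team flips only at its left super-game. The genuinely new work is the last six days: because the cycle length $m-2$ is odd, the arrangement orders differ for $u_i$ (namely $A_1 \cdot self \cdot A_2 \cdot \overline{A_1} \cdot \overline{A_2} \cdot \overline{self}$) versus $u_l, u_r$ (namely $self \cdot A_1 \cdot A_2 \cdot \overline{A_1} \cdot \overline{A_2} \cdot \overline{self}$), so I would tabulate, for each team category, the last symbol before day $2n-7$ together with these six-day strings to confirm both internal validity and a clean junction. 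Finally, the case $n \equiv 6 \pmod 8$ only reverses the dashed edges, which flips a fixed set of strings to their complements without affecting any adjacency check, so the same verification applies.
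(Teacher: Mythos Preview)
Your plan follows the paper's proof almost exactly: same five-property checklist, same partition into within-slot and across-slot checks for no-repeat, and same $H/A$-string bookkeeping for bounded-by-$2$ with separate treatment of $u_l$, $u_r$, and the cycle super-teams. Two points need sharpening to match what the paper actually does.

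First, the explicit tabulation at the end must cover the last \emph{ten} days (time slots $m-2$ and $m-1$), not just the last six. Your middle-slot argument (``direction flips only at the left super-game'') applies only through time slot $m-3$, because slot $m-2$ contains the \emph{special} left super-game (Figure~\ref{figb4+}), whose pattern is the penultimate-type $AAHH/AHHA/HAAH/HHAA$ rather than the normal left pattern $AHHA/HAAH$. So for $u_l$ and $u_1$ the string in slot $m-2$ is not predicted by your generic analysis, and the junction you need to verify is at day $2n-12$, not $2n-8$. The paper handles this exactly as in the even case by tabulating all ten final games (Figure~\ref{figb8}).

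Second, three team categories are not enough for the tabulation. The paper splits into $u_l$, $u_r$, $u_1$, $u_o$ for odd $o\in\{3,\dots,m-2\}$, and $u_e$ for even $e\in\{2,\dots,m-3\}$. The super-team $u_1$ is singled out by the special left super-game in slot $m-2$, and the odd/even distinction among the remaining $u_i$ is forced by the alternating structure of $A_1$ and $A_2$ in Figure~\ref{figb7}. With these refinements your outline becomes the paper's proof.
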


\begin{proof}
First, we show that each team plays all the required $2n-2$ games in the $2n-2$ days.
According to the schedule, we can see that each team will attend one game in each of the $2n-2$ days. Furthermore,
it is not hard to observe that no two teams play against each other at the same place.
So each team will play the required $2n-2$ games.

Second, it is easy to see that each team will not violate the no-repeat property.
In any time slot, no two games between the same teams are arranged in two consecutive days. Especially, $self$ and $\overline{self}$ are not arranged on two consecutive days, which we can see in the last time slot. In two different time slots, each team will play against different teams.

Last, we prove that each team does not violate the bounded-by-$k$ property. We still use `$H$' and `$A$' to denote a home game and an away game, respectively. We will also let $\overline{H}=A$ and $\overline{A}=H$.

We first look at the games in the first $2n-12$ days. For the two teams in $u_l$, the four games in the first time slot will be $HHAA$ (see Figure~\ref{figb3}), the four games in an even time slot will be $HAAH$ (see Figure~\ref{figb4}), and the four games in an odd time slot (not containing the first time slot) will be $AHHA$. So two consecutive time slots can combine well.
For the two teams in $u_r$, the four games of team $t_{r_1}$ will always be $AHHA$, and the four games of team $t_{r_2}$ will always be $\overline{HAAH}$ in each time slot, which can be seen from the construction of right super-games.
Two consecutive time slots can still combine well.
Next, we consider a team $t_i$ in $u_j$ $(j\in \{1,2,\dots,m-2\})$.
In the time slots for away normal/right super-games (the direction of the edge is out of $u_j$), the four games will be $AAHH$, and $\overline{AAHH}$ otherwise.
In the time slots for away left super-games, the four games will be $AHHA$, and $\overline{AAHH}$ otherwise.
According to the rotation scheme of our schedule, super-team $u_j$ will always play away (resp., home) normal/right super-games until it plays an away (resp., a home) left super-games. After playing the away (resp., home) left super-games, it will always play home (resp., away) normal/right super-games. So two consecutive time slots can combine well.

Finally, we have the last ten days in the last two time slots not analyzed yet. We just list out the last ten games in the last two time slots for each team. For the sake of presentation, we also let $t_{i_1}=t_{2i-1}$ and $t_{i_2}=t_{2i}$. We will have six different cases: teams in $u_l$, $u_r$, $u_1$, $u_o$ for odd $o\in \{3,\dots, m-2\}$, and $u_e$ for even $e\in \{2,\dots, m-3\}$.
The last ten games are shown in Figure~\ref{figb8}.

\begin{figure}[ht]
    \centering
    \includegraphics[scale=0.7]{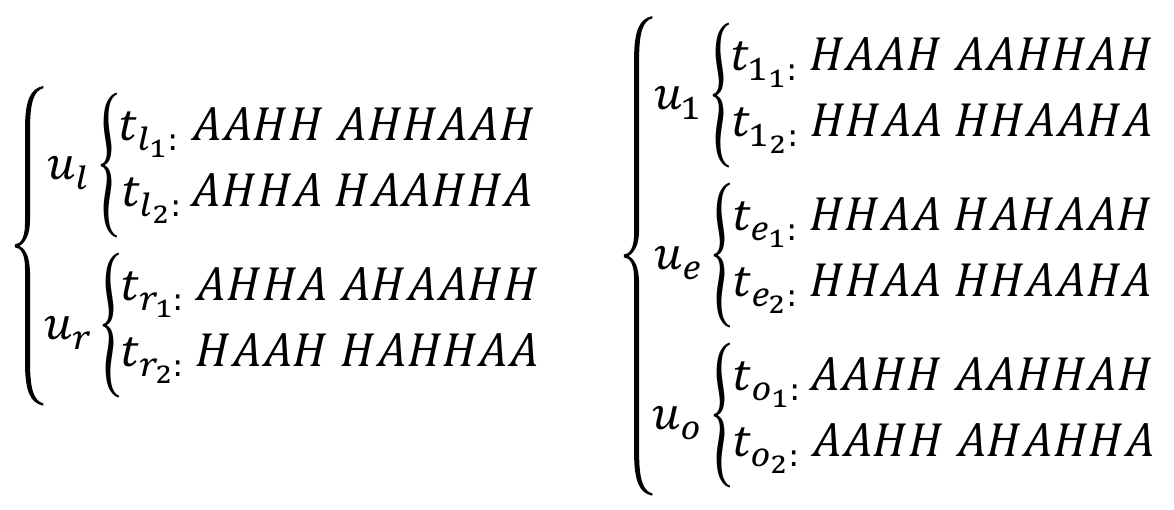}
    \caption{The last ten games for the case of $n\equiv 2\pmod 4$, where $o\in \{3,\dots, m-2\}$ and $e\in \{2,\dots, m-3\}$}
    \label{figb8}
\end{figure}

From Figure~\ref{figb8}, we can see that there are no three consecutive home/away games. It is also easy to see that on day $2n-12$ (the last day in time slot $m-3$), the games for $t_{l_1}$ and $t_{l_2}$ (in $u_l$) are $H$, the games for $t_{r_1}$ and $t_{r_2}$ (in $u_r$) are $A$ and $H$, the games for $t_{1_1}$ and $t_{1_2}$ (in $u_1$) are $A$, the games for $t_{e_1}$ and $t_{e_2}$ (in $u_e$) are $A$, and the games for $t_{o_1}$ and $t_{o_2}$ (in $u_o$) are $H$. So time slots $m-3$ and $m-2$ can also combine well without creating three consecutive home/away games.

Thus, the bounded-by-$k$ property also holds. We know our schedule is feasible for TTP-2.
\end{proof}

\subsection{Analyzing the Approximation Quality}
We still compare the itinerary of each team with its optimal itinerary.

For teams in super-teams $u_i$ ($1\leq i\leq m-1$), we can see that they stay at home before the first game in a super-game and return home after the last game in the super-game (for the last two super-games in the last two time slots, we can see it in Figure~\ref{figb8}).
We can look at the sub itinerary of a team on the four days in a left/normal super-game, which is coincident with a sub itinerary of the optimal itinerary (see the proof Lemma \ref{extra}). But, the sub itinerary of a team in a right/last super-game may be not coincident with a sub itinerary of the optimal itinerary. For example, in the right super-game between super-teams $u_1=\{t_1,t_2\}$ and $u_{m-2}=\{t_{2m-5},t_{2m-4}\}$, the sub itinerary of team $t_{1}$ contains one road trip $(t_1,t_{r1},t_{2m-5},t_1)$, while the optimal itinerary of it, containing two road trips $(t_1,t_{r1},t_{r2},t_1)$ and $(t_1,t_{2m-5},t_{2m-4},t_1)$, cannot contain a coincident sub itinerary.
So we may compare the sub itineraries in two right super-games and the last super-games together, which will be coincident with some sub itineraries in its optimal itinerary.
For teams in super-team $u_r$, since team $t_{r1}$ always plays $AHHA$ and team $t_{r2}$ always plays $HAAH$ in each right super-game of the first $m-2$ time slots, the sub itinerary of them in each of these time slots may not be coincident with a sub itinerary of the optimal itinerary. We may just compare their whole itineraries with their optimal itineraries.

In the following part, we always assume that $u_0=u_{m-2}$, i.e., $t_{-1}=t_{2m-5}$ and $t_{0}=t_{2m-4}$.
We use $\Delta_i$ to denote the sum expected extra cost of teams $t_{2i-1}$ and $t_{2i}$ in super-team $u_i$.
For the sake of presentation, we attribute the extra cost of teams in left super-games to $\Delta_{m-1}$, which will be analyzed on super-team $u_l$. Thus, when we analyze super-teams $u_i$ ($1\leq i\leq m-2$), we will not analyze the extra cost caused in its left super-game again.

\textbf{The extra cost of super-team $u_l$:}
For super-team $u_l$, there are $m-3$ left super-games (including one special left super-game) in the middle $m-3$ time slots and one last super-game in the last time slot.
As mentioned before, the design of (normal) left super-game is still the same, while the design of (special) left super-game is the same as the penultimate super-game in the case of even $n/2$.
By Lemma~\ref{extra}, we know that the expected extra cost of $m-4$ (normal) left super-games and one (special) left super-game is bounded by $(m-4)\times\frac{4}{n(n-2)}\LB+\frac{2}{n(n-2)}\LB=\frac{2n-14}{n(n-2)}\LB$.
In the last time slot, we can directly compare teams' road trips with their coincident sub itineraries of their optimal itineraries. Both teams $t_{l1}$ and $t_{l2}$ will have the same road trips as that in the optimal itineraries (see the design of the six days: $self\cdot A_1\cdot A_2\cdot \overline{A_1}\cdot \overline{A_2}\cdot \overline{self}$). We can get that
\begin{equation}\label{u_l}
\Delta_{m-1}\leq \frac{2n-14}{n(n-2)}\LB.
\end{equation}

\textbf{The extra cost of super-team $u_1$:}
For super-team $u_1$, it plays $m-5$ normal super-games, one (special) left super-game, two right super-games, and one last super-game. The normal super-games do not cause extra cost by Lemma~\ref{core}. Since the extra cost of the left super-game has been analyzed on super-team $u_l$, we only need to analyze the extra cost of the two right super-games and the last super-game. Figure~\ref{figbb1} shows their road trips in these three time slots and their coincident sub itineraries of their optimal itineraries.

\begin{figure}[ht]
    \centering
    \includegraphics[scale=0.7]{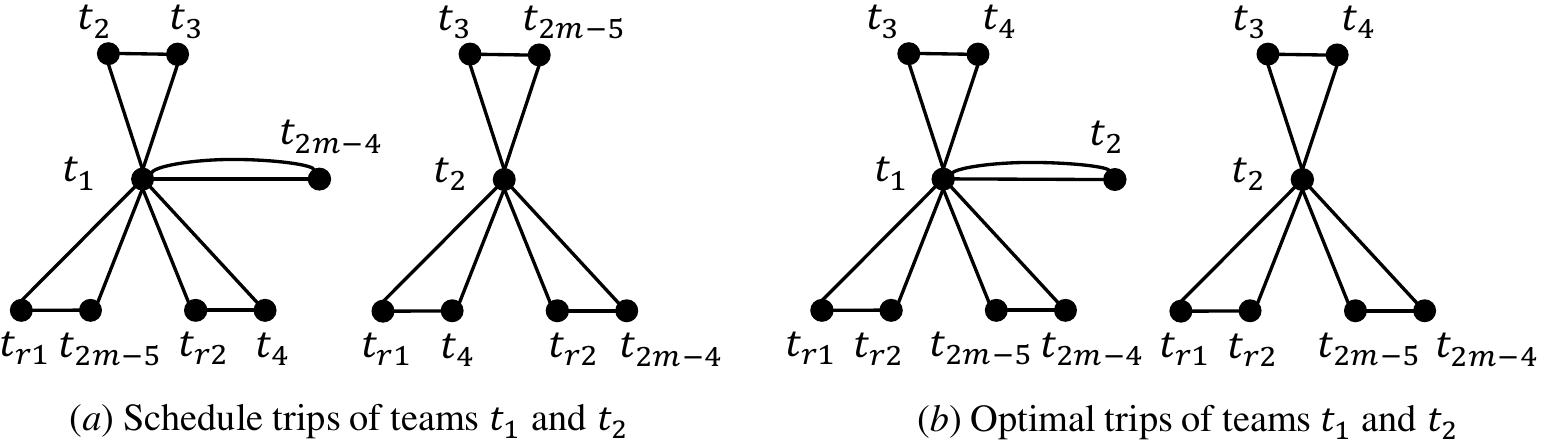}
    \caption{The road trips of teams in $u_1$ and their coincident sub itineraries of their optimal itineraries}
    \label{figbb1}
\end{figure}

By Lemma~\ref{core}, we can get that
\begin{equation}\label{u_1}
\begin{aligned}
\Delta_1 =&\ \EE{(D_{1,2m-4}+D_{2,3}+D_{r1,2m-5}+D_{r2,4})-(D_{1,2}+D_{3,4}+D_{2m-5,2m-4}+D_{r1,r2})}\\
&\ +\EE{(D_{3,2m-5}+D_{r1,4}+D_{r2,2m-4})-(D_{3,4}+D_{2m-5,2m-4}+D_{r1,r2})}\\
\leq&\ \frac{7}{n(n-2)}\LB.
\end{aligned}
\end{equation}

\textbf{The extra cost of super-team $u_i$ ($i=3,5...,m-4$):}
For super-team $u_i$, it plays $m-5$ normal super-games, one left super-game, two right super-games, and one last super-game.
Similarly, the normal super-games do not cause extra cost and the left super-game has been analyzed.
Only the two right super-games and the last super-game can cause extra cost. Figure~\ref{figbb2} shows theirs road trips in these three time slots and their coincident sub itineraries of their optimal itineraries.

\begin{figure}[ht]
    \centering
    \includegraphics[scale=0.7]{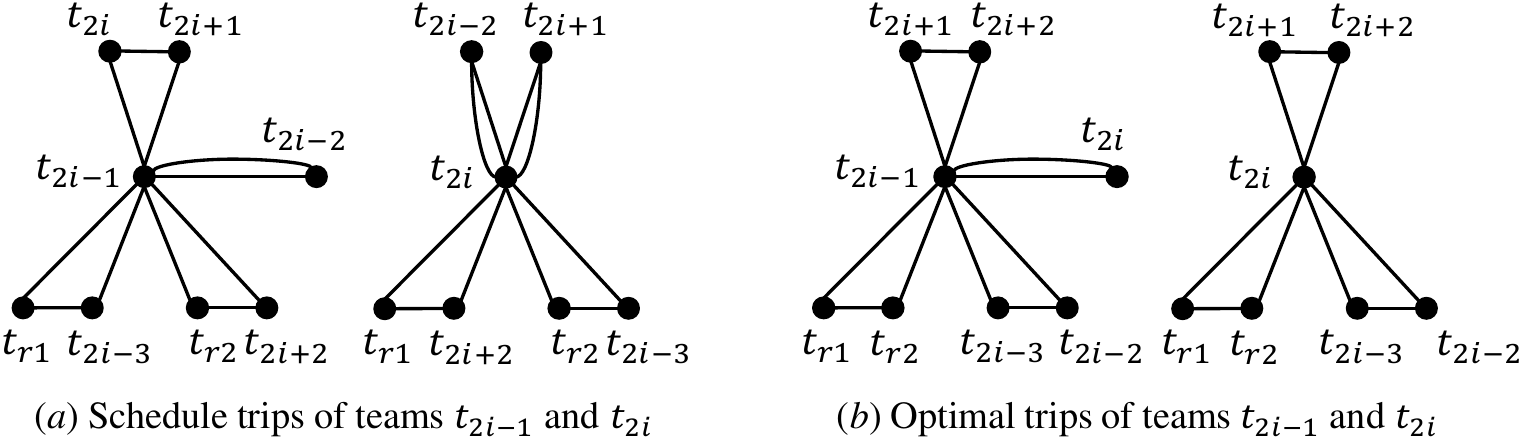}
    \caption{The road trips of teams in $u_i$ ($i=3,5...,m-4$) and their coincident sub itineraries of their optimal itineraries}
    \label{figbb2}
\end{figure}

By Lemma~\ref{core}, we can get that
\begin{equation}\label{u_oddi}
\begin{aligned}
\Delta_i =&\ \EE{(D_{2i-2,2i-1}+D_{2i,2i+1}+D_{r1,2i-3}+D_{r2,2i+2})-(D_{2i-3,2i-2}+D_{2i-1,2i}+D_{2i+1,2i+2}+D_{r1,r2})}\\
&\ +\EE{(D_{2i-2,2i}+D_{2i,2i+1}+D_{r1,2i+2}+D_{r2,2i-3})-(D_{2i-3,2i-2}+D_{2i+1,2i+2}+D_{r1,r2})}\\
\leq&\ \frac{8}{n(n-2)}\LB.
\end{aligned}
\end{equation}

\textbf{The extra cost of super-team $u_i$ ($i=2,4...,m-3$):}
For super-team $u_i$, it plays $m-5$ normal super-games, one left super-game, two right super-games, and one last super-game.
Similarly, only these two right super-games and the last super-game could cause extra cost. However, by the design of right super-games and the last super-game, the two teams in $u_i$ will have the same road trips as that in the optimal itinerary and then the extra cost is 0.

\textbf{The extra cost of super-team $u_{m-2}$:}
For super-team $u_{m-2}$, it also plays $m-5$ normal super-games, one left super-game, two right super-games, and one last super-game.
Only the two right super-games and the last super-game can cause extra cost.
Figure~\ref{figbb3} shows theirs road trips in these three time slots and their coincident sub itineraries of their optimal itineraries.

\begin{figure}[ht]
    \centering
    \includegraphics[scale=0.7]{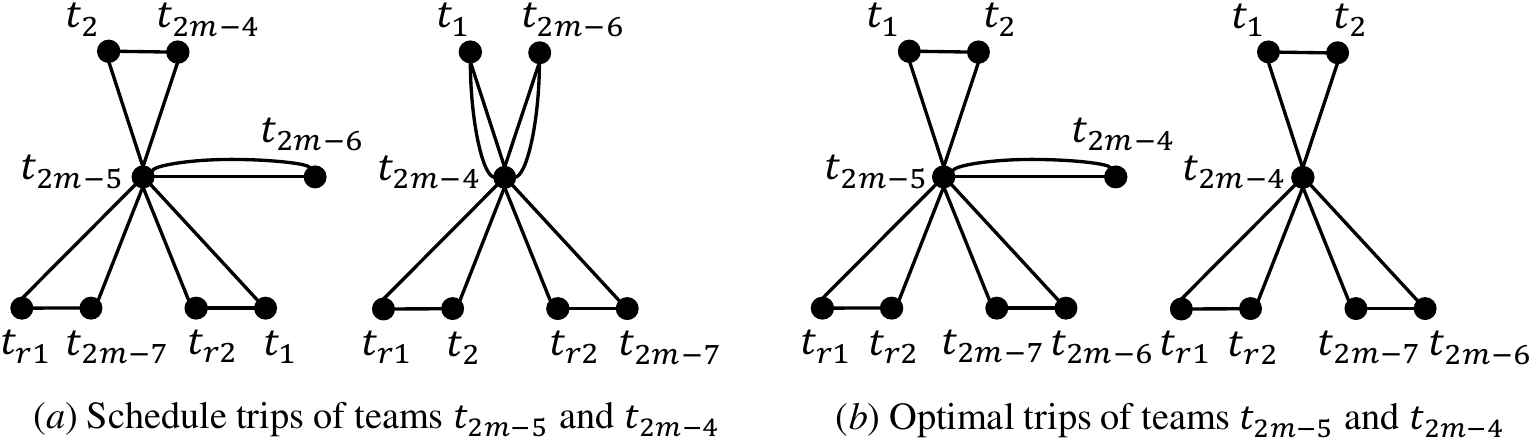}
    \caption{The road trips of teams in $u_{m-2}$ and their coincident sub itineraries of their optimal itineraries}
    \label{figbb3}
\end{figure}

By Lemma~\ref{core}, we can get that
\begin{equation}\label{u_m-2}
\begin{aligned}
\Delta_{m-2} =&\ \EE{(D_{2,2m-4}+D_{2m-6,2m-5}+D_{r1,2m-7}+D_{r2,1})-(D_{1,2}+D_{2m-7,2m-6}+D_{2m-5,2m-4}+D_{r1,r2})}\\
&\ +\EE{(D_{1,2m-4}+D_{2m-6,2m-4}+D_{r1,2}+D_{r2,2m-7})-(D_{1,2}+D_{2m-7,2m-6}+D_{r1,r2})}\\
\leq&\ \frac{8}{n(n-2)}\LB.
\end{aligned}
\end{equation}

By (\ref{u_l}), (\ref{u_1}), (\ref{u_oddi}), and (\ref{u_m-2}), we can get that

\begin{equation}\label{sum1}
\sum_{i=1}^{m-2}\Delta_i\leq \frac{7}{n(n-2)}\LB+\frac{m-3}{2}\times\frac{8}{n(n-2)}\LB+\frac{2n-14}{n(n-2)}\LB=\frac{4n-19}{n(n-2)}\LB.
\end{equation}

\textbf{The extra cost of super-team $u_r$:}
For super-team $u_r$, there are two teams $t_{r1}$ and $t_{r2}$.
For team $t_{r1}$, there is one road trip visiting $t_{l1}$ and $t_{l2}$ in the last time slot (see the design of the six days: $self\cdot A_1\cdot A_2\cdot \overline{A_1}\cdot \overline{A_2}\cdot \overline{self}$), which does not cause extra cost. So we will not compare this sub itinerary. Figure~\ref{figbb4} shows the other road trips of $t_{r1}$ in the first $m-2$ time slots and the whole road trips of $t_{r2}$ in $m-1$ time slots. Note that the road trips in Figure~\ref{figbb4} correspond to the case of $n\equiv 2 \pmod 8$.

\begin{figure}[ht]
    \centering
    \includegraphics[scale=0.7]{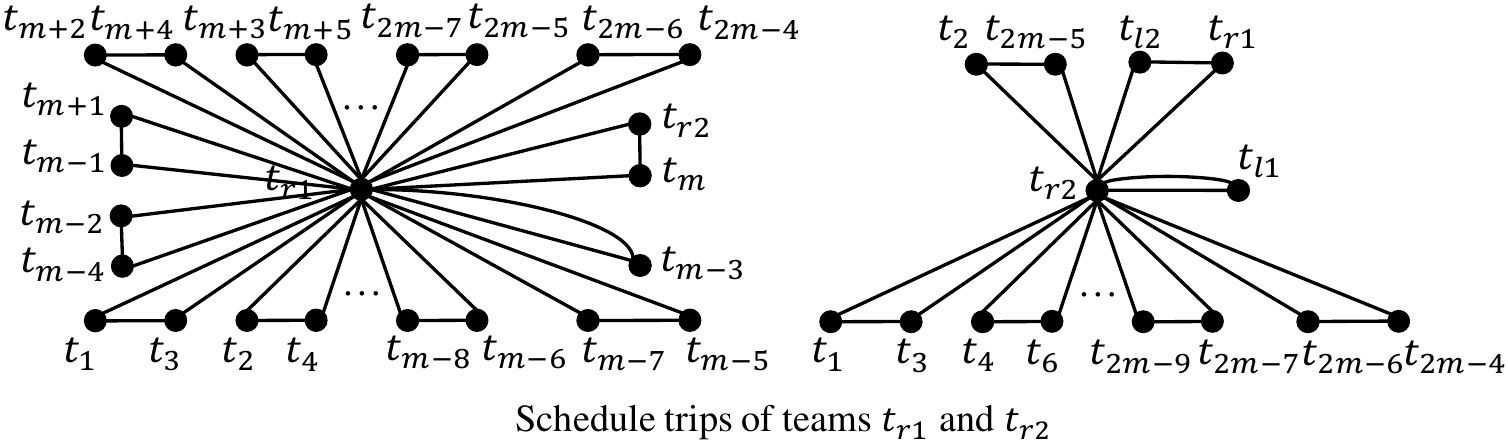}
    \caption{The road trips of teams in $u_{r}$ for the case of $n\equiv 2 \pmod 8$}
    \label{figbb4}
\end{figure}

By Lemma~\ref{core}, for the case of $n\equiv 2\pmod 8$, we can get that
\begin{equation}\label{u_r}
\begin{aligned}
\Delta_m =&\ \EEE{\sum_{i=1}^{\frac{m-5}{4}}(D_{4i-3,4i-1}+D_{4i-2,4i})+\sum_{i=\frac{m+3}{4}}^{\frac{m-3}{2}}(D_{4i-1,4i+1}+D_{4i,4i+2})}\\
&\ +\EEE{D_{m-4,m-2}+D_{m-1,m+1}+D_{r1,m-3}+D_{r2,m}-\sum_{i=1}^{m-2}D_{2i-1,2i}-D_{r1,r2}}\\
&\ +\EEE{\sum_{i=1}^{\frac{m-3}{2}}(D_{4i-3,4i-1}+D_{4i,4i+2})+(D_{2,2m-5}+D_{r1,l2}+D_{r2,l1})-\sum_{i=1}^{m}D_{2i-1,2i}}\\
\leq&\ \lrA{2\times\frac{m-5}{4}+2\times\frac{m-5}{4}+4+2\times\frac{m-3}{2}+3}\times\frac{1}{n(n-2)}\LB\\
=&\ \frac{n-1}{n(n-2)}\LB.
\end{aligned}
\end{equation}

For the case of $n\equiv 6 \pmod 8$, the dash edges will be reversed (see Figure \ref{figb1}) and then the road trips of teams in $u_r$ may be slightly different. Figure~\ref{figbb5} shows the other road trips of $t_{r1}$ in the first $m-2$ time slots and the whole road trips of $t_{r2}$ in $m-1$ time slots. We can see that the road trips of team $t_{r2}$ are still the same while the road trips of team $t_{r1}$ are different.

\begin{figure}[ht]
    \centering
    \includegraphics[scale=0.7]{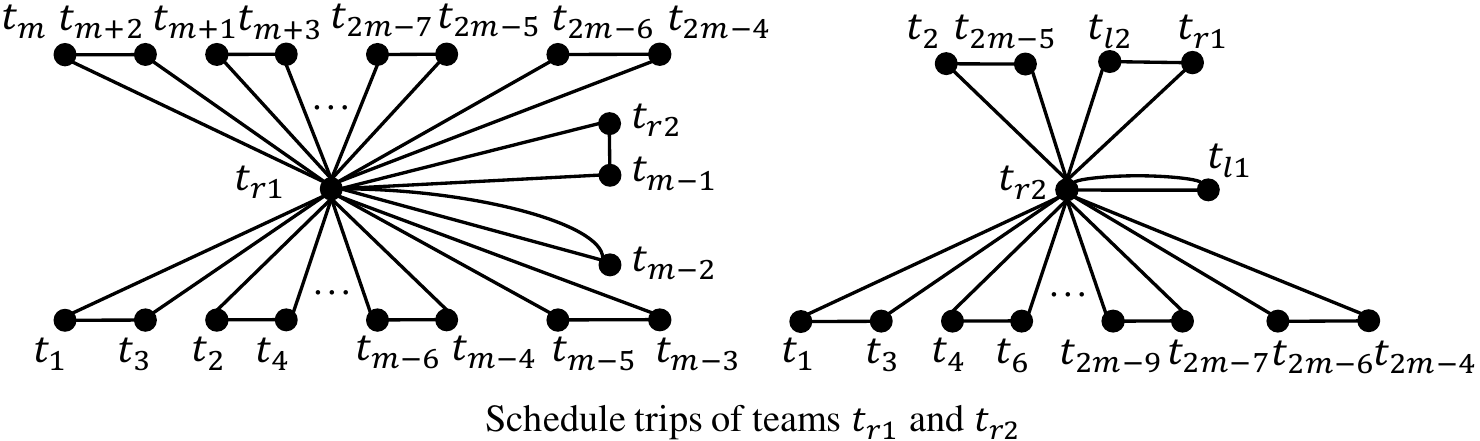}
    \caption{The road trips of teams in $u_{r}$ for the case of $n\equiv 6 \pmod 8$}
    \label{figbb5}
\end{figure}

By Lemma~\ref{core}, for the case of $n\equiv 6\pmod 8$, we can get that
\begin{equation}\label{u_r+}
\begin{aligned}
\Delta_m =&\ \EEE{\sum_{i=1}^{\frac{m-3}{4}}(D_{4i-3,4i-1}+D_{4i-2,4i})+\sum_{i=\frac{m+1}{4}}^{\frac{m-3}{2}}(D_{4i-1,4i+1}+D_{4i,4i+2})}\\
&\ +\EEE{D_{r1,m-2}+D_{r2,m-1}-\sum_{i=1}^{m-2}D_{2i-1,2i}-D_{r1,r2}}\\
&\ +\EEE{\sum_{i=1}^{\frac{m-3}{2}}(D_{4i-3,4i-1}+D_{4i,4i+2})+D_{2,2m-5}+D_{r1,l2}+D_{r2,l1}-\sum_{i=1}^{m}D_{2i-1,2i}}\\
\leq&\ \lrA{2\times\frac{m-3}{4}+2\times\frac{m-3}{4}+2+2\times\frac{m-3}{2}+3}\times\frac{1}{n(n-2)}\LB\\
=&\ \frac{n-1}{n(n-2)}\LB.
\end{aligned}
\end{equation}
Therefore, the upper bound of $\Delta_m$ in these two cases is identical.

By (\ref{sum1}), (\ref{u_r}), and (\ref{u_r+}), we can get that the
total expected extra cost is
\[
\sum_{i=1}^{m}\Delta_i\leq\frac{4n-19}{n(n-2)}\LB+\frac{n-1}{n(n-2)}\LB=\lrA{\frac{5}{n}-\frac{10}{n(n-2)}}\LB.
\]

\begin{theorem}\label{res_2}
For TTP-$2$ with $n$ teams, when $n\geq 10$ and $n\equiv 2 \pmod 4$, there is a randomized $O(n^3)$-time algorithm with an expected approximation ratio of $1+{\frac{5}{n}}-{\frac{10}{n(n-2)}}$.
\end{theorem}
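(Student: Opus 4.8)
The plan is to obtain the theorem as the assembly of the three ingredients developed in the preceding two subsections: the explicit randomized construction, its feasibility, and a bound on the expected total extra cost. The algorithm itself is the four-step randomized ordering of Section on the order of teams followed by the super-game construction for $n\equiv 2\pmod 4$, which runs in $O(n^3)$ time dominated by computing the minimum weight perfect matching $M$. Feasibility is already furnished by Theorem~\ref{feas2}. It therefore remains to compare the traveling distance of this schedule against the independent lower bound $\LB=2D_G+nD_M$ of (\ref{eqn_lowerbound}), which is itself a lower bound on $\OPT$. Concretely, I would write the total distance of our schedule as $\LB$ plus the sum of the extra costs of all teams, so that the expected ratio is at most $1+(\text{expected total extra cost})/\LB$, and the whole task reduces to bounding the expected total extra cost by $\lrA{\frac{5}{n}-\frac{10}{n(n-2)}}\LB$.

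To bound the expected extra cost, I would organize it per super-team, letting $\Delta_i$ denote the expected extra cost of the two teams in $u_i$. By Lemma~\ref{extra}(a) the normal super-games contribute nothing, so only the left, right, and last super-games matter; with the convention of attributing each left super-game's cost to $\Delta_{m-1}$, the cycle super-teams $u_1,\dots,u_{m-2}$ incur cost only in their two right super-games and in the last super-game, while $u_l$ collects the left super-games and $u_r$ is handled separately. Using Lemma~\ref{core} to bound each expected off-matching edge weight by $\frac{1}{n(n-2)}\LB$, I would derive the estimates (\ref{u_l}), (\ref{u_1}), (\ref{u_oddi}), and (\ref{u_m-2}) for the cycle super-teams (observing that the even-indexed $u_i$ contribute zero because their road trips coincide with the optimal ones), sum them as in (\ref{sum1}) to get $\sum_{i=1}^{m-2}\Delta_i\leq\frac{4n-19}{n(n-2)}\LB$, and finally add the $u_r$ estimate $\Delta_m\leq\frac{n-1}{n(n-2)}\LB$.

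The hard part is the analysis of $u_r$ and of the right super-games in general, because a single right super-game is not coincident with any sub itinerary of the optimal itinerary: a team there makes one road trip touching one member of each of two distinct super-teams, whereas its optimal itinerary pairs each super-team internally. The device is to compare two right super-games together with the last super-game so that coincidence is restored, and for $t_{r1},t_{r2}$ to compare their entire itineraries rather than any single time slot, since $t_{r1}$ always plays $AHHA$ and $t_{r2}$ always plays $HAAH$. This is also what forces the split into $n\equiv 2\pmod 8$ and $n\equiv 6\pmod 8$: reversing the dashed edges of Figure~\ref{figb1} changes the road trips of $t_{r1}$, and I would verify separately that both cases give $\Delta_m\leq\frac{n-1}{n(n-2)}\LB$, matching (\ref{u_r}) and (\ref{u_r+}).

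Putting the pieces together, the expected total extra cost is at most $\lrA{\frac{5}{n}-\frac{10}{n(n-2)}}\LB$, so the expected traveling distance is at most $\lrA{1+\frac{5}{n}-\frac{10}{n(n-2)}}\LB\leq\lrA{1+\frac{5}{n}-\frac{10}{n(n-2)}}\OPT$. I would close by noting that the ordering is randomized so the stated ratio holds in expectation, that it can be derandomized by the method of conditional expectations without changing the bound, and that the running time remains $O(n^3)$.
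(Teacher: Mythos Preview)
Your proposal is correct and follows essentially the same approach as the paper: you organize the expected extra cost per super-team, invoke the bounds (\ref{u_l}), (\ref{u_1}), (\ref{u_oddi}), (\ref{u_m-2}) together with the zero contribution of the even-indexed cycle super-teams to obtain (\ref{sum1}), handle $u_r$ via (\ref{u_r}) and (\ref{u_r+}) with the case split on $n\bmod 8$, and combine with feasibility (Theorem~\ref{feas2}) and the $O(n^3)$ matching time. The only minor slip is your closing remark that derandomization keeps the running time at $O(n^3)$; the paper's conditional-expectations argument actually costs $O(n^4)$, but this is irrelevant to the theorem as stated since it concerns the randomized algorithm.
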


The analysis of our algorithm for odd $n/2$ is also tight.
We can also consider the example, where the distance of each edge in the minimum matching $M$ is 0 and the distance of each edge in $G-M$ is 1. Recall that the independent lower bound satisfies $\LB=n(n-2)$.
For odd $n/2$, by the analysis of $\Delta_i$, we can compute that the total extra cost of our construction is $5n-20$. Thus, in this case, the ratio is
\[
1+\frac{5n-20}{n(n-2)}=1+\frac{5}{n}-\frac{10}{n(n-2)}.
\]

\section{The Derandomization}
In the previous sections, we proposed a randomized $(1+\frac{4L(n)+n}{n(n-2)})$-approximation algorithm for even $n/2$ and a randomized $(1+\frac{5}{n}-\frac{10}{n(n-2)})$-approximation algorithm for odd $n/2$. In this section, we show how to derandomize our algorithms efficiently by using the method of conditional expectations~\cite{motwani1995randomized}.
This method was also used to derandomize a $(2+O(1/n))$-approximation algorithm for TTP-3 in \cite{miyashiro2012approximation}.
Their algorithm randomly orders all teams while our algorithms first randomly orders super-teams and then randomly orders the teams in each super-team. This is the difference.
We will also analyze a running-time bound of the derandomization.

According to the analysis of our algorithms, the total extra cost is bounded by
\begin{equation}\label{de1}
W=\sum_{\substack{1\leq i'<j'\leq m\\t_i\in u_{i'} \& t_j\in u_{j'}}} n_{ij}D_{i,j},
\end{equation}
where $n_{ij}$ is the number of times edge $t_it_j$ appears in computing the total extra cost.

In the main framework of our algorithms, there are two steps using the randomized methods: the first is that we use $\{u_1,\dots,u_m\}$ to label the $m$ edges in $M$ randomly; the second is that we use $\{t_{2i-1}, t_{2i}\}$ to label the two teams in each super-team $u_i$ randomly.

We first consider the derandomization of super-teams, i.e., use $\{u_1,\dots,u_m\}$ to label the $m$ edges in $M$ in a deterministic way such that the expected approximation ratio still holds.

\subsection{The Derandomization of Super-teams}
Suppose the $m$ edges in $M$ are denoted by $\{e_1,\dots, e_m\}$. We wish to find a permutation $\sigma: (1,2,\dots,m)\leftrightarrow (\sigma_1, \sigma_2,\dots,\sigma_m)$ such that
\[
\EE{W|u_1=e_{\sigma_1},u_2=e_{\sigma_2},\dots,u_m=e_{\sigma_m}}\leq \EE{W}.
\]
We can determine each $\sigma_i$ sequentially. Suppose we have already determine $(\sigma_1, \sigma_2,\dots,\sigma_{s-1})$ such that
\[
\EE{W|u_1=e_{\sigma_1},u_2=e_{\sigma_2},\dots,u_{s-1}=e_{\sigma_{s-1}}}\leq \EE{W}.
\]
To determine $\sigma_{s}$, we can simply let $\sigma_s$ be
\begin{equation}\label{de2}
\sigma_s=\arg\min_{\sigma_s\in \{1,2,\dots,m\}\setminus\{\sigma_1, \sigma_2,\dots,\sigma_{s-1}\}} \EE{W|u_1=e_{\sigma_1},u_2=e_{\sigma_2},\dots,u_s=e_{\sigma_s}}\leq \EE{W}.
\end{equation}
Then, we can get
\begin{equation}\label{de3}
\EE{W|u_1=e_{\sigma_1},u_2=e_{\sigma_2},\dots,u_s=e_{\sigma_s}}\leq \EE{W|u_1=e_{\sigma_1},u_2=e_{\sigma_2},\dots,u_{s-1}=e_{\sigma_{s-1}}}\leq \EE{W}.
\end{equation}
Therefore, we can repeat this procedure to determine the permutation $\sigma$.

Next, we show how to compute $\EE{W|u_1=e_{\sigma_1},u_2=e_{\sigma_2},\dots,u_s=e_{\sigma_s}}$. Recall that $D(u_{i'},u_{j'})=\sum_{t_{i}\in u_{i'} \& t_{j}\in u_{j'}}D_{i,j}$.
When $t_i\in u_{i'}$ and $t_j\in u_{j'}$ ($i'\neq j'$), we can get that
\[
\EE{D_{i,j}|u_1=e_{\sigma_1},u_2=e_{\sigma_2},\dots,u_s=e_{\sigma_s}}=\frac{1}{4}\EE{D(u_{i'},u_{j'})|u_1=e_{\sigma_1},u_2=e_{\sigma_2},\dots,u_s=e_{\sigma_s}},
\]
since the two teams in each super-team are still labeled randomly.
Hence, by (\ref{de1}) and (\ref{de3}), we can get
\[
\begin{aligned}
&\EE{W|u_1=e_{\sigma_1},u_2=e_{\sigma_2},\dots,u_s=e_{\sigma_s}}\\
&=\ \EEE{\sum_{\substack{1\leq i'<j'\leq m\\t_i\in u_{i'} \& t_j\in u_{j'}}} n_{ij}D_{i,j}|u_1=e_{\sigma_1},u_2=e_{\sigma_2},\dots,u_s=e_{\sigma_s}}\\
&=\ \sum_{\substack{1\leq i'<j'\leq m\\t_i\in u_{i'} \& t_j\in u_{j'}}}n_{ij}\EE{ D_{i,j}|u_1=e_{\sigma_1},u_2=e_{\sigma_2},\dots,u_s=e_{\sigma_s}}\\
&=\ \sum_{\substack{1\leq i'<j'\leq m\\t_i\in u_{i'} \& t_j\in u_{j'}}}\frac{1}{4}n_{ij}\EE{D(u_{i'},u_{j'})|u_1=e_{\sigma_1},u_2=e_{\sigma_2},\dots,u_s=e_{\sigma_s}}\\
&=\ \sum_{\substack{1\leq i'<j'\leq m}}\frac{1}{4}m_{ij}\EE{D(u_{i'},u_{j'})|u_1=e_{\sigma_1},u_2=e_{\sigma_2},\dots,u_s=e_{\sigma_s}},
\end{aligned}
\]
where $m_{i'j'}=\sum_{t_i\in u_{i'} \& t_j\in u_{j'}}n_{ij}$.
Let $S_s=\{\sigma_1, \sigma_2,\dots,\sigma_s\}$, $\overline{S_s}=\{1,2,\dots,m\}\setminus S_s$, $T_s=\{1, 2,\dots,s\}$, and $\overline{T_s}=\{1,2,\dots,m\}\setminus T_s$.
The value $\EE{D(u_{i'},u_{j'})|u_1=e_{\sigma_1},u_2=e_{\sigma_2},\dots,u_s=e_{\sigma_s}}$ can be computed as follows:
\[
\begin{aligned}
&\EE{D(u_{i'},u_{j'})|u_1=e_{\sigma_1},u_2=e_{\sigma_2},\dots,u_s=e_{\sigma_s}} &\\
&=\left\{
\begin{array}{*{20}l}
D(u_{i'},u_{j'}), & i'\in T_s, j'\in T_s,\\
\frac{1}{m-s}\sum_{\sigma_{j'}\in\overline{S_s}}D(u_{i'}, e_{\sigma_{j'}}), & i'\in T_s, j'\in \overline{T_s},\\
\frac{1}{m-s}\sum_{\sigma_{i'}\in\overline{S_s}}D(e_{\sigma_{i'}}, u_{j'}), & i'\in \overline{T_s}, j'\in T_s,\\
\frac{1}{(m-s-1)(m-s)}\sum_{\sigma_{i'},\sigma_{j'}\in\overline{T_s}}D(e_{\sigma_{i'}}, e_{\sigma_{j'}}), & i'\in \overline{T_s}, j'\in \overline{T_s},\\
\end{array}
\right.
\end{aligned}
\]
where $D(e_{\sigma_{i'}}, e_{\sigma_{j'}})$ is the sum distance of all four edges between vertices of $e_{\sigma_{i'}}$ and vertices of $e_{\sigma_{j'}}$ (the edge $e_{\sigma_{i'}}$/$e_{\sigma_{j'}}$ can be regarded as a super-team).

Next, we analyze the running time of our derandomization on super-teams.
When $i'\in T_s$ and $j'\in \overline{T_s}$, there are $O(s^2)=O(n^2)$ variables, the expected conditional value of each variable $D(u_{i'},u_{j'})$ can be computed in $O(1)$ time, and hence the expected conditional values of these variables can be computed in $O(n^2)$ time.
When $i'\in T_s$ and $j'\in \overline{T_s}$, there are $O(s(m-s))=O(n^2)$ variables, the expected conditional value of each variable $D(u_{i'},u_{j'})$, which is not related to $j'$, can be computed in $O(m-s)=O(n)$ time, and hence these variables can be computed in $O(ns)=O(n^2)$ time.
Similarly, when $i'\in \overline{T_s}$ and $j'\in T_s$, these variables can be computed in $O(n^2)$ time.
When $i'\in \overline{T_s}$ and $j'\in \overline{T_s}$, there are $O((m-s)^2)=O(n^2)$ variables, the expected conditional value of each variable $D(u_{i'},u_{j'})$, which is not related to $i'$ and $j'$ (it is a constant), can be computed in $O(s^2)=O(n^2)$ time, and hence these variables can be computed in $O(ns)=O(n^2)$ time.
Therefore, all of $O(n^2)$ variables can be computed in $O(n^2)$ time. Hence, the value $\EE{W|u_1=e_{\sigma_1},u_2=e_{\sigma_2},\dots,u_s=e_{\sigma_s}}$ can be computed in $O(n^2)$ time. To determine $\sigma_{s}$, by (\ref{de2}), we need to use $O(n^3)$ time. Therefore, to determine the permutation $\sigma$, the total running time is $O(n^4)$.

Next, we consider the derandomization of each super-team, i.e., use $\{t_{2i-1}, t_{2i}\}$ to label the two vertices of edge $e_{\sigma_i}$ in a deterministic way such that the expected approximation ratio still holds.

\subsection{The Derandomization of Each Super-team}
Now, we assume that the $m$ edges in $M$ are directed edges. Let $a(e)$ and $b(e)$ be the tail vertex and the head vertex of the directed edge $e$, respectively. In the previous derandomization, we have determined $u_i=e_{\sigma_i}$, i.e., the super-team $u_i$ refers to the edge $e_{\sigma_i}$. Hence, the weight in (\ref{de1}) can be written as $W(\sigma)$. Note that $\EE{W(\sigma)}\leq \EE{W}$. Then, we need to determine the labels of $a(e_{\sigma_i})$ and $b(e_{\sigma_i})$. We use $u_i=e^{0}_{\sigma_i}$ (resp., $u_i=e^{1}_{\sigma_i}$) to mean that we let $t_{2i-1}=a(e_{\sigma_i})$ and $t_{2i}=b(e_{\sigma_i})$ (resp., $t_{2i-1}=b(e_{\sigma_i})$ and $t_{2i}=a(e_{\sigma_i})$).
We wish to find a vector $(\pi_1,\pi_2,\dots,\pi_m)\in\{0,1\}^m$ such that
\[
\EE{W(\sigma)|u_1=e^{\pi_1}_{\sigma_1},u_2=e^{\pi_2}_{\sigma_2},\dots,u_m=e^{\pi_m}_{\sigma_m}}\leq \EE{W(\sigma)}.
\]
The idea of derandomization is the same. We can determine each $\pi_i$ sequentially.
Suppose we have already determine $(\pi_1, \pi_2,\dots,\pi_{s-1})$ such that
\[
\EE{W(\sigma)|u_1=e^{\pi_1}_{\sigma_1},u_2=e^{\pi_2}_{\sigma_2},\dots,u_{s-1}=e^{\pi_{s-1}}_{\sigma_{s-1}}}\leq \EE{W(\sigma)}.
\]
To determine $\pi_{s}$, using a similar argument, we know that we can let $\pi_s$ be
\begin{equation}\label{de22}
\pi_s=\arg\min_{\pi_s\in \{0,1\}} \EE{W(\sigma)|u_1=e^{\pi_1}_{\sigma_1},u_2=e^{\pi_2}_{\sigma_2},\dots,u_s=e^{\pi_s}_{\sigma_s}}\leq \EE{W(\sigma)}.
\end{equation}

Next, we show how to compute $\EE{W(\sigma)|u_1=e^{\pi_1}_{\sigma_1},u_2=e^{\pi_2}_{\sigma_2},\dots,u_s=e^{\pi_s}_{\sigma_s}}$.
By (\ref{de1}), we can get
\[
\begin{aligned}
&\EE{W(\sigma)|u_1=e^{\pi_1}_{\sigma_1},u_2=e^{\pi_2}_{\sigma_2},\dots,u_s=e^{\pi_s}_{\sigma_s}}\\
&=\ \EEE{\sum_{\substack{1\leq i'<j'\leq m\\t_i\in u_{i'} \& t_j\in u_{j'}}} n_{ij}D_{i,j}|u_1=e^{\pi_1}_{\sigma_1},u_2=e^{\pi_2}_{\sigma_2},\dots,u_s=e^{\pi_s}_{\sigma_s}}\\
&=\ \sum_{\substack{1\leq i'<j'\leq m\\t_i\in u_{i'} \& t_j\in u_{j'}}}n_{ij}\EE{ D_{i,j}|u_1=e^{\pi_1}_{\sigma_1},u_2=e^{\pi_2}_{\sigma_2},\dots,u_s=e^{\pi_s}_{\sigma_s}}.
\end{aligned}
\]
Note that $t_i\in u_{i'}$ and $t_j\in u_{j'}$.
Let $T_s=\{1, 2,\dots,s\}$ and $\overline{T_s}=\{1,2,\dots,m\}\setminus T_s$.
The value $\EE{D_{i,j}|u_1=e^{\pi_1}_{\sigma_1},u_2=e^{\pi_2}_{\sigma_2},\dots,u_s=e^{\pi_s}_{\sigma_s}}$ can be computed as follows:
\[
\begin{aligned}
&\EE{D_{i,j}|u_1=e^{\pi_1}_{\sigma_1},u_2=e^{\pi_2}_{\sigma_2},\dots,u_s=e^{\pi_s}_{\sigma_s}} &\\
&=\left\{
\begin{array}{*{20}l}
D_{i,j}, & i'\in T_s, j'\in T_s,\\
D(t_i, u_{j'})/2, & i'\in T_s, j'\in \overline{T_s},\\
D(u_{i'}, t_j)/2, & i'\in \overline{T_s}, j'\in T_s,\\
D(u_{i'}, u_{j'})/4, & i'\in \overline{T_s}, j'\in \overline{T_s},
\end{array}
\right.
\end{aligned}
\]
where $D(t_i, u_{j'})=\sum_{t_j\in u_{j'}}D_{i,j}$ and $D(u_{i'}, t_j)=\sum_{t_i\in u_{i'}}D_{i,j}$.

Using a similar argument, the value $\EE{D_{i,j}|u_1=e^{\pi_1}_{\sigma_1},u_2=e^{\pi_2}_{\sigma_2},\dots,u_s=e^{\pi_s}_{\sigma_s}}$ can be computed in $O(n^2)$ time. To determine $\pi_{s}$, by (\ref{de22}), we need to take $O(n^2)$ time. Therefore, to determine the vector $(\pi_1,\pi_2,\dots,\pi_m)$, the total running time is $O(n^3)$.

Therefore, the derandomization takes $O(n^4)$ extra time in total.

\section{Experimental Results}
To evaluate the performance of our schedule algorithms, we implement our algorithms and test them on well-known benchmark instances.
In fact, the full derandomizations in our algorithms are not efficient in practical.
We will use some heuristic methods to get a good order of teams in our schedules. Thus, in the implementation we use the randomized algorithms and combine them with some simple local search heuristics.

\subsection{The Main Steps in Experiments}
In our algorithms and experiments, we first compute a minimum weight perfect matching.
After that, we pack each edge in the matching as a super-team and randomly order them with $\{u_1,u_2,...,u_{m}\}$.
For each super-team $u_i$, we also randomly order the two teams in it with $\{t_{2i-1}, t_{2i}\}$. Using the obtained order, we can get a feasible solution according to our construction methods. To get possible improvements, we will use two simple swapping rules: the first is to swap two super-teams, and the second is to swap the two teams in each super-team. These two swapping rules can keep the pairs of teams in super-teams corresponding to the minimum weight perfect matching. The details of the two swapping rules are as follows.

\medskip
\noindent
\textbf{The first swapping rule on super-teams:} Given an initial schedule obtained by our randomized algorithms, where
the super-teams $\{u_1,u_2,...,u_{m}\}$ are randomly ordered. We are going to swap the positions of some pairs of super-teams $(u_i, u_j)$ to reduce the total traveling distance.
There are $\frac{m(m-1)}{2}$ pairs of super-teams $(u_i, u_j)$ with $1\leq i<j\leq m$.
We consider the $\frac{m(m-1)}{2}$ pairs $(i,j)$ in an order by dictionary.
From the first pair to the last pair $(i,j)$, we test whether the total traveling distance can be reduced after we swap the positions of the two super-teams $u_i$ and $u_j$. If no, we do not swap them and go to the next pair. If yes, we swap the two super-teams and go to the next pair. After considering all the $\frac{m(m-1)}{2}$ pairs, if there is an improvement, we repeat the whole procedure. Otherwise, the procedure ends.

\medskip
\noindent
\textbf{The second swapping rule on teams in each super-team:}
There are $m$ super-teams $u_i$ in our algorithm. For each super-team $u_i$, there are two normal teams in it, which are randomly ordered initially. We are going to swap the positions of two normal teams in a super-team to reduce the total traveling distance.
We consider the $m$ super-teams $u_i$ in an order. For each super-team $u_i$, we test whether the total traveling distance can be reduced after we swap the positions of the two teams in the supper-team $u_i$. If no, we do not swap them and go to the next super-team. If yes, we swap the two teams and go to the next super-team. After considering all the $m$ super-teams, if there is an improvement, we repeat the whole procedure. Otherwise, the procedure ends.

\medskip
In our experiments, one suit of swapping operations is to iteratively apply the two swapping rules until no improvement we can further get.
Since the initial order of the teams is random, we may generate several different initial orders and apply the swapping operations on each of them.
In our experiments, when we say excusing $x$ rounds, it means that we generate $x$ initial random orders, apply the suit of swapping operations on each of them, and return the best result.

\subsection{Applications to Benchmark Sets}
We implement our schedule algorithms to solve the benchmark instances in~\cite{trick2007challenge}.
The website introduces 62 instances, most of which were reported from real-world sports scheduling scenarios, such as the Super 14 Rugby League, the National Football League, and the 2003 Brazilian soccer championship. The number of teams in the instances varies from 4 to 40.
There are 34 instances of even $n/2$, and 28 instances of odd $n/2$.
Almost half of the instances are very small ($n\leq 8$) or very special (all teams are on a cycle or the distance between any two teams is 1), and they were not tested in previous papers~\cite{thielen2012approximation,DBLP:conf/mfcs/XiaoK16}. So we only test the remaining 33 instances, including 17 instances of even $n/2$ and 16 instances of odd $n/2$.
Due to the difference between the two algorithms for even $n/2$ and odd $n/2$, we will show the results separately for these two cases.

\medskip
\noindent
\textbf{Results of even $n/2$:}
For the case of even $n/2$, we compare our results with the best-known results in Table~\ref{experimentresult1}. In the table, the column
`\emph{Lower Bounds}' indicates the independent lower bounds;
`\emph{Previous Results}' lists previous known results in~\cite{DBLP:conf/mfcs/XiaoK16};
`\emph{Initial Results}' shows the results given by our initial randomized schedule;
`\emph{$x$ Rounds}' shows the best results after generating $x$ initial randomized schedules and applying the suit of swapping operations on them (we show the results for $x=1,10, 50$ and 300);
`\emph{Our Gap}' is defined to be $\frac{300~Rounds~-~Lower~Bounds}{Lower~Bounds}$, and `\emph{Improvement Ratio}' is defined as $\frac{Previous~Results~-~300~Rounds}{Previous~Results}$.

\begin{table}[ht]
\footnotesize
\centering
\begin{tabular}{c|ccccccccc}
\hline
Data & Lower  & Previous & Initial & 1  & 10 & 50  & 300 & Our & Improvement \\
Set & Bounds &  Results & Results & Round & Rounds  & Rounds & Rounds & Gap & Ratio\\
\hline
Galaxy40 & 298484 & 307469 &314114& 305051 & 304710 & 304575 & ${304509}$ & 2.02 & 0.96\\
Galaxy36 & 205280 & 212821 &218724& 210726 & 210642 & 210582 & ${210461}$ & 2.52 & 1.11
\\
Galaxy32 & 139922 & 145445 &144785& 142902 & 142902 & 142834 & ${142781}$ & 2.04 & 1.83
\\
Galaxy28 & 89242 & 93235 &94173& 92435 & 92121 & 92105 & ${92092}$ & 3.19 & 1.23
\\
Galaxy24 & 53282 & 55883 &55979& 54993 & ${54910}$ & 54910 & 54910 & 3.06 & 1.74\\
Galaxy20 & 30508 & 32530 &32834& 32000 & 31926 & ${31897}$ & 31897 & 4.55 & 1.95
\\
Galaxy16 & 17562 & 19040 &18664& 18409 & ${18234}$ & 18234 & 18234 & 3.83 & 4.23
\\
Galaxy12 & 8374 & 9490 &9277& 8956 & 8958 & ${8937}$ & 8937 & 6.72 & 5.83\\
NFL32 & 1162798 & 1211239 &1217448& 1190687 & 1185291 & 1185291 & ${1184791}$ & 1.89 & 2.18\\
NFL28 & 771442 & 810310 &818025& 800801 & 796568 & ${795215}$ & 795215 & 3.08 & 1.86
\\
NFL24 & 573618 & 611441 &602858& 592422 & 592152 & ${591991}$ & 591991 & 3.20 & 3.18
\\
NFL20 & 423958 & 456563 &454196& 443718 & ${441165}$ & 441165 & 441165 & 4.06 & 3.37
\\
NFL16 & 294866 & 321357 &312756& ${305926}$ & 305926 & 305926 & 305926 & 3.75 & 4.80
\\
NL16 & 334940 & 359720 &355486& 351250 & ${346212}$ & 346212 & 346212 & 3.37 & 3.76
\\
NL12 & 132720 & 144744 &146072& 139394 & ${139316}$ & 139316 & 139316 & 4.97 & 3.75\\
Super12 & 551580 & 612583 &613999& ${586538}$ & 586538 & 586538 & 586538 & 6.34 & 4.25
\\
Brazil24 & 620574 & 655235 &668236& 642251 & ${638006}$ & 638006 & 638006 & 2.81 & 2.63\\
\end{tabular}
\caption{Experimental results for even $n/2$  with an average improvement of 2.86\%}
\label{experimentresult1}
\end{table}

\medskip
\noindent
\textbf{Results of odd $n/2$:}
For odd $n/2$, we compare our results with the best-known results in Table~\ref{experimentresult2}.
Note now the previous known results in column `\emph{Previous Results}' is from another reference~\cite{thielen2012approximation}.

\begin{table}[ht]
\footnotesize
\centering
\begin{tabular}{c|ccccccccc}
\hline
Data & Lower  & Previous &Initial& 1 & 10 & 50  & 300 & Our & Improvement\\
Set & Bounds &  Results &Results& Round & Rounds  & Rounds & Rounds & Gap & Ratio\\
\hline
Galaxy38 & 244848 & 274672 &268545& 256430 & 255678 & ${255128}$ & 255128 & 4.20 & 7.12\\
Galaxy34 & 173312 & 192317 &188114& 180977 & 180896 & ${180665}$ & 180665 & 4.24 & 6.06\\
Galaxy30 & 113818 & 124011 &123841& 119524 & 119339 & 119122 & ${119076}$ & 4.62 & 3.98\\
Galaxy26 & 68826 & 77082 &75231& 73108 & 72944 & 72693 & ${72639}$ & 5.54 & 5.76\\
Galaxy22 & 40528 & 46451 &45156& 43681 & 43545 & 43478 & ${43389}$ & 7.06 & 6.59\\
Galaxy18 & 23774 & 27967 &27436& 26189 & ${26020}$ & 26020 & 26020 & 9.45 & 6.96\\
Galaxy14 & 12950 & 15642 &15070& 14540 & 14507 & ${14465}$ & 14465 & 11.70 & 7.52\\
Galaxy10 & 5280 & 6579 &6153& 5917 & ${5915}$ & 5915 & 5915 & 12.03 & 10.09
\\
NFL30 & 951608 & 1081969 &1051675& 1008487 & 1005000 & 1002665 & ${1001245}$ & 5.22 & 7.46\\
NFL26 & 669782 & 779895 &742356& 715563 & 715563 & ${714675}$ & 714675 & 6.70 & 8.36\\
NFL22 & 504512 & 600822 &584624& 548702 & 545791 & ${545142}$ & 545142 & 8.05 & 9.27\\
NFL18 & 361204 & 439152 &418148& 400390 & 398140 & ${397539}$ & 397539 & 10.06 & 9.48\\
NL14 & 238796 & 296403 &281854& 269959 & ${266746}$ & 266746 & 266746 & 11.70 & 10.01\\
NL10 & 70866 & 90254 &83500& 81107 & 80471 & ${80435}$ & 80435 & 13.50 & 10.88\\
Super14 & 823778 & 1087749 &1025167& ${920925}$ & 920925 & 920925 & 920925 & 11.79 & 15.34\\
Super10 & 392774 & 579862 &529668& 503275 & ${500664}$ & 500664 & 500664 & 27.47 & 13.66\\
\end{tabular}
\caption{Experimental results for odd $n/2$ with an average improvement of 8.65\%}
\label{experimentresult2}
\end{table}

From Tables~\ref{experimentresult1} and \ref{experimentresult2}, we can see that our algorithm improves all the 17 instances of even $n/2$ and the 16 instances of odd $n/2$.
For one round, the average improvement on the 17 instances of even $n/2$ is $2.49\%$, and the the average improvement on the 16 instances of odd $n/2$ is $8.18\%$.
For 10 rounds, the improvements will be $2.81\%$ and $8.51\%$, respectively. For 50 rounds, the improvements will be $2.85\%$ and $8.63\%$, respectively.
If we run more rounds, the improvement will be very limited and it is almost no improvement after 300 rounds.

Another important issue is about the running time of the algorithms. Indeed, our algorithms are very effective.
Our algorithms are coded using C-Free 5.0, on a standard desktop computer with a 3.20GHz AMD Athlon 200GE CPU and 8 GB RAM.
Under the above setting, for one round, all the 33 instances can be solved together within 1 second. If we run $x$ rounds, the running time will be less than $x$ seconds.
Considering that algorithms are already very fast, we did not specifically optimize the codes.
The codes of our algorithms can be found in \url{https://github.com/JingyangZhao/TTP-2}.

\section{Conclusion}
We design two schedules to generate feasible solutions to TTP-2 with $n\equiv 0 \pmod 4$ and $n\equiv 2 \pmod 4$ separately, which can guarantee the traveling distance at most $(1+{\frac{3}{n}}-{\frac{10}{n(n-2)}})$ times of the optimal for the former case and $(1+{\frac{5}{n}}-{\frac{10}{n(n-2)}})$ for the latter case. Both improve the previous best approximation ratios.
For the sake of analysis, we adopt some randomized methods. The randomized algorithms can be derandomized efficiently with an extra running-time factor of $O(n^4)$. It is possible to improve it to $O(n^3)$ by using more detailed analysis (as we argued this for the case of even $n/2$ in \cite{DBLP:conf/cocoon/ZhaoX21}).
In addition to theoretical improvements, our algorithms are also very practical.
In the experiment, our schedules can beat the best-known solutions for all instances in the well-known benchmark of TTP with an average improvement of $2.86\%$ for even $n/2$ and an average improvement of $8.65\%$ for odd $n/2$. The improvements in both theory and practical are significant.

For further study, it should be interesting to extend the construction techniques and analyses in this paper to more TTP related problems, for example, TTP-$k$ with $k\geq 3$, single round-robin version of TTP-2 (STTP-2) \cite{imahori20211+}, linear distance relaxation of TTP-$k$ (LDTTP-$k$)~\cite{DBLP:journals/jair/HoshinoK12}, and so on.

\section*{Acknowledgments}
The work is supported by the National Natural Science Foundation of China, under grant 61972070.

\bibliographystyle{plain}
\bibliography{mybib}
\end{document}